\newtheorem{theorem}{Theorem}
\newtheorem{lemma}{Lemma}
\newtheorem{corollary}{Corollary}
\newtheorem{remark}{Remark}
\newcommand{\bs}{\boldsymbol}
\newcommand{\coef}{\mathrm{coef}}
\newcommand{\diff}[1]{d#1}
\newcommand{\pderi}[2]{\frac{\partial #1}{\partial #2}}
\newcommand{\deri}[2]{\frac{\diff{#1}}{\diff{#2}}}
\newcommand{\neib}[1]{\mathcal{N}_{\mathrm{#1}}}
\newcommand{\Ens}{\mathcal{G}}
\begin{document}

\sloppy

\title{
Weight Distribution for Non-binary Cluster LDPC Code Ensemble
}

\author{
\IEEEauthorblockN{Takayuki Nozaki}
\IEEEauthorblockA{
Kanagawa University, JAPAN\\
Email: nozaki@kanagawa-u.ac.jp
}
\and
\IEEEauthorblockN{
Masaki Maehara,
Kenta Kasai,
and Kohichi Sakaniwa}
\IEEEauthorblockA{
Tokyo Institute of Technology, JAPAN\\
Email: \{maehara, kenta,  sakaniwa\}@comm.ce.titech.ac.jp}
}
\maketitle

\begin{abstract}
In this paper, we derive the average weight distributions for the irregular non-binary cluster low-density parity-check (LDPC) code ensembles.
Moreover, we give the exponential growth rate of the average weight distribution in the limit of large code length.
We show that there exist ($2,d_{\mathrm{c}}$)-regular non-binary cluster LDPC code ensembles whose normalized typical minimum distances are strictly positive.
\end{abstract}

\section{Introduction}
Gallager invented low-density parity-check (LDPC) codes \cite{Gallager_LDPC}.
Due to the sparseness of the parity check matrices,
LDPC codes are efficiently decoded by the belief propagation (BP) decoder. 
Optimized LDPC codes can exhibit performance very close 
to the Shannon limit \cite{richardson01design}.
Davey and MacKay \cite{DaveyMacKayGFq} 
have found that non-binary LDPC codes can outperform binary ones.

The LDPC codes are defined by sparse parity check matrices or sparse Tanner graphs.
For the non-binary LDPC codes, the Tanner graphs are represented by bipartite graph with variable nodes and check nodes and labeled edges.
The LDPC codes defined by Tanner graphs with the variable nodes of degree $d_{\mathrm v}$
and the check nodes of degree $d_{\mathrm{c}}$ 
are called $(d_{\mathrm{v}}, d_{\mathrm{c}})$-regular LDPC codes. 
It is empirically known that $(2,d_{\mathrm c})$-regular non-binary LDPC codes
exhibit good decoding performance among other LDPC codes for the non-binary LDPC code defined over Galois field of order greater than $32$ \cite{Hu03regularand}. 

Savin and Declercq proposed the non-binary cluster LDPC codes \cite{6034183}.
For the non-binary cluster LDPC code, each edge in the Tanner graphs is labeled by {\it cluster} which is a full-rank $p \times r$ binary matrix, where $p \ge r$.
In \cite{6034183}, Savin and Declercq showed that there exist $(2,d_{\mathrm{c}})$-regular non-binary cluster LDPC ensembles whose minimum distance grows linearly with the code length.

Deriving the weight distribution is important to analyze the decoding performances for the linear codes.
In particular, in the case for LDPC codes, weight distribution gives
a bound of decoding error probability under maximum likelihood decoding \cite{959254}
and error floors under belief propagation decoding and maximum likelihood decoding \cite{di_wd} \cite{mct}.

Studies on weight distribution for non-binary LDPC codes date back to \cite{Gallager_LDPC}.
Gallager derived the symbol-weight distribution of Gallager code ensemble defined over $\mathbb{Z}/q\mathbb{Z}$ \cite{Gallager_LDPC}.
Kasai et al.\ derived the average symbol and bit weight distributions and the exponential growth rates for the irregular non-binary LDPC code ensembles defined over Galois field $\mathbb{F}_q$, and showed that the normalized typical minimum distance does not monotonically grow with $q$ \cite{Kasai_nbw}.
Andriyanova et al.\ derive the bit weight distributions and the exponential growth rates for the regular non-binary LDPC code ensembles defined over Galois field and general linear groups \cite{5205662}.

In this paper, we derive the average symbol and bit weight distributions for the irregular non-binary cluster LDPC code ensembles.
Moreover, we give the exponential growth rate of the average weight distributions in the limit of large code length.

The remainder of this paper is organized as follows:
Section \ref{sec:pre} defines the irregular non-binary cluster LDPC code ensemble.
Section \ref{sec:wd} derives the average weight distributions for the irregular non-binary LDPC code ensembles.
Section \ref{sec:gr} gives the exponential growth rate of the average weight distributions in the limit of large code length and shows some numerical examples for the exponential growth rate.

\section{Preliminaries \label{sec:pre}}
In this section, we review non-binary cluster LDPC code \cite{6034183} and define the irregular non-binary cluster LDPC code ensemble.
We introduce some notations used throughout this paper.

\subsection{Non-binary Cluster LDPC Code}
The LDPC codes are defined by sparse parity check matrices or sparse Tanner graphs.
For the non-binary LDPC codes, the Tanner graphs are represented by bipartite graphs with variable nodes and check nodes and {\it labeled} edges.

For the non-binary cluster LDPC codes, each edge in the Tanner graphs is labeled by {\it cluster} which is a full-rank $p \times r$ binary matrix, where $p \ge r$.
Let $\mathbb{F}_2$ be the finite field of order 2.
Note that the non-binary LDPC codes defined by Tanner graphs labeled by general linear group $\mathrm{GL}(p,\mathbb{F}_2)$ are special cases for the non-binary cluster LDPC code with $p=r$.

We denote the cluster in the edge between the $i$-th variable node and the $j$-th check node, by $h_{j,i}$.
For the cluster LDPC codes, $r$-bits are assigned to each variable node in the Tanner graphs.
We refer to the $r$-bits assigned to the $i$-th variable node as {\it symbol} assigned to the $i$-th variable node, and denote it by ${\bs x}_{i}\in\mathbb{F}_2^r$.

For integers $a,b$, we denote the set of integers between $a$ and $b$, as $[a;b]$.
More precisely, we define
\begin{align*}
 [a;b] := 
 \begin{cases}
  \{n\in\mathbb{N}\mid a\le n \le b\}, & a\le b,\\
  \emptyset = \{\}, & a>b.
 \end{cases}
\end{align*}
The non-binary cluster LDPC code defined by a Tanner graph $\mathtt{G}$ is given as follows:
\begin{align*}
 C(\mathtt{G})
  =
 \Bigl\{( {\bs x}_1,\dots,{\bs x}_N)\in (\mathbb{F}_2^r)^N 
  \mid
 {\textstyle\sum_{i\in \neib{c}(j)}}h_{j,i}{\bs x}_i^T ={\bs 0}^T\in\mathbb{F}_2^p
 ~~\forall j\in[1;M]\Bigr\},
\end{align*}
where $\neib{c}(j)$ represents the set of indexes of the variable nodes adjacent to the $j$-th check node.
Note that $N$ is called symbol code length and the bit code length $n$ is given by $r N$.

\subsection{Irregular Non-binary Cluster LDPC Code Ensemble}
Let $\mathcal{L}$ and $\mathcal{R}$ be the sets of degrees of
the variable nodes and the check nodes, respectively.
Irregular non-binary cluster LDPC codes are characterized with
the number of variable nodes $N$, the size of cluster $p,r$
and a pair of {\it degree distribution},
$\lambda(x) = \sum_{i\in{\cal L}}\lambda_{i}x^{i-1}$ and
$\rho(x) = \sum_{i\in{\cal R}}\rho_{i}x^{i-1}$,
where $\lambda_i$ and $\rho_{i}$ are the fractions of the edges 
connected to the variable nodes and the check nodes of degree $i$,
respectively.

The total number of the edges in the Tanner graph is
\begin{equation*}
 E
  :=
 \frac{N}{\int_{0}^{1}\lambda(x)\diff{x}.}
\end{equation*}
The number of check node $M$ is given by 
\begin{equation*}
 M = 
 \biggl(
   \frac{\int_0^1 \rho(x)\diff{x}}{\int_0^1 \lambda(x)\diff{x}}
 \biggr) N
  =: \kappa N.
\end{equation*}
Let $L_i$ and $R_j$ be the fraction of the variable nodes of degree $i$
and the check nodes of degree $j$, respectively, i.e.,
\begin{align*}
 L_i := 
 \frac{\lambda_{i}}{i\int_{0}^{1}\lambda(x)\diff{x}},
\quad
 R_j := 
 \frac{\rho_{j}}{ j\int_{0}^{1}\rho(x)   \diff{x}}.
\end{align*}
The design rate is given as follows:
\begin{equation*}
 1 - \frac{\kappa p}{r}.
\end{equation*}

Assume that we are given the number of variable nodes $N$, 
the size of cluster $p,r$ and the degree distribution pair $(\lambda,\rho)$.
An irregular non-binary cluster LDPC code ensemble $\Ens(N,p,r,\lambda,\rho)$ 
is defined as the following way.
There exist $L_i N$ variable nodes of degree $i$ and 
$R_j M$ check nodes of degree $j$.
A node of degree $i$ has $i$ sockets for its connected edges.
Consider a permutation $\pi$ on the number of edges.
Join the $i$-th socket on the variable node side to the $\pi(i)$-th socket
on the check node side.
The bipartite graphs are chosen with equal probability
from all the permutations on the number of edges.
Each cluster in an edge is chosen a full-rank $p\times r$ binary matrix 
with equal probability.

\section{Weight Distribution for Non-binary Cluster LDPC Code \label{sec:wd}}
In this section, we derive the average symbol and bit weight distribution for the irregular non-binary cluster LDPC code ensemble $\Ens(N,p,r,\lambda,\rho)$.

We denote the $r$-bit representation of ${\bs x}_i\in\mathbb{F}_{2}^r$, by $(x_{i,1},\dots,x_{i,r})$.
For a given codeword ${\bs x} = ({\bs x}_1,{\bs x}_2,\dots,{\bs x}_N)$,
we denote the symbol and bit weight of ${\bs x}$, by $w({\bs x})$ and $w_{\mathrm{b}}({\bs x})$.
More precisely, we define
\begin{align*}
 w({\bs x}) 
  &:=
 |\{ i\in[1;N] \mid {\bs x}_i \neq {\bs 0}\}|, \\
 w_{\mathrm{b}}({\bs x})
  &:=
 |\{ (i,j)\in[1;N]\times[1;r] \mid x_{i,j} \neq 0\}|.
\end{align*}
For a given Tanner graph $\mathtt{G}$, let $A^{\mathtt{G}}(\ell)$ (resp.\ $A^{\mathtt{G}}_{\mathrm{b}}(\ell)$) be the number of codeword of symbol and (resp.\ bit) weight $\ell$ in $C(\mathtt{G})$, i.e.,
\begin{align*}
 &A^{\mathtt{G}}(\ell)
  =
 |\{ {\bs x}\in C(\mathtt{G}) 
   \mid  
   w({\bs x}) = \ell
  \}|,  \\
 &A^{\mathtt{G}}_{\mathrm{b}}(\ell)
  =
 |\{ {\bs x}\in C(\mathtt{G}) 
   \mid  
   w_{\mathrm{b}}({\bs x}) = \ell
  \}|.
\end{align*}
For the irregular non-binary cluster LDPC code ensemble $\Ens(N,r,p,\lambda,\rho)$, we denote the average number of codewords of symbol and bit weight $\ell$, by $A(\ell)$ and $A_{\mathrm{b}}(\ell)$, respectively.
Since each Tanner graph in the ensemble $\mathcal{G}=\Ens(N,r,p,\lambda,\rho)$ is chosen with uniform probability, the following equations hold:
\begin{align*}
 A(\ell) 
  = 
 \frac{1}{|\mathcal{G}|}\sum_{\mathtt{G}\in\mathcal{G}}  A^{\mathtt{G}}(\ell),
 \quad
 A_{\mathrm{b}}(\ell) 
  = 
 \frac{1}{|\mathcal{G}|}\sum_{\mathtt{G}\in\mathcal{G}} 
  A^{\mathtt{G}}_{\mathrm{b}}(\ell).
\end{align*}

Since the number of full-rank binary $p\times r$ matrix is $\prod_{i=0}^{r-1}(2^p-2^i)$, 
the number of codes in the ensemble $\mathcal{G} = \Ens(N,r,p,\lambda,\rho)$ is given as
\begin{align}
|\mathcal{G}| 
  = 
 E ! \Biggl\{ \prod_{i=0}^{r-1}  (2^p-2^i) \Biggr\}^{E}.
 \label{eq:num_code}
\end{align}

\subsection{Symbol Codeword Weight Distribution}
First, we will derive the average symbol weight distributions for the irregular non-binary cluster LDPC code ensembles.
\begin{theorem} \label{the:1}
The average number $A(\ell)$ of codewords of symbol weight $\ell$ for the irregular non-binary cluster LDPC code ensemble $\Ens(N,p,r,\lambda,\rho)$ is 
\begin{align}
&A(\ell)
  =
 \sum_{k=0}^{E}
 \frac{(2^r-1)^{\ell}\coef\bigl((P(s,t)Q(u))^N,s^{\ell}t^{k}u^k\bigr)}
 {\binom{E}{k}(2^p-1)^k}, \label{eq:Aell} \\
&P(s,t)
 :=
 \prod_{i\in\mathcal{L}} \bigl(1+st^i\bigr)^{L_i},\quad
Q(u)
 :=
 \prod_{j\in\mathcal{R}} f_j(u)^{\kappa R_j}, \notag \\
&f_j(u)
 :=
 \frac{1}{2^p}\bigl[\{1+(2^p-1)u\}^j+(2^p-1)(1-u)^j\bigr], \label{eq:fj}
\end{align}
where
$\coef(g(s,t,u),s^i t^j u^k)$ is the coefficient of the term $s^i t^j u^k$
of the polynomial $g(s,t,u)$.
\end{theorem}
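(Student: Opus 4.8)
The plan is to compute $A(\ell)$ by a generating-function / finite-size counting argument that tracks, simultaneously, how many variable nodes are nonzero, how many edges emanate from nonzero variable nodes, and how many of those edges carry a nonzero ``contribution'' on the check side. First I would fix a Tanner graph $\mathtt{G}$ in the ensemble and, instead of summing over codewords directly, use the standard transfer trick: average first over the random clusters (conditioned on the graph), then over the graph. The key observation is that if a variable node $i$ is assigned a nonzero symbol ${\bs x}_i\in\mathbb{F}_2^r$, then for a uniformly random full-rank cluster $h_{j,i}$ the product $h_{j,i}{\bs x}_i^T$ is a uniformly random \emph{nonzero} vector in $\mathbb{F}_2^p$ (since the columns of a full-rank $p\times r$ matrix are linearly independent, $h_{j,i}$ acts injectively, so nonzero maps to nonzero, and by symmetry the image is uniform over $\mathbb{F}_2^p\setminus\{{\bs 0}\}$); if ${\bs x}_i={\bs 0}$ the product is ${\bs 0}$. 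Hence, conditioned on the set $S$ of nonzero variable nodes with $|S|=\ell$, each edge incident to $S$ independently carries a uniform nonzero vector in $\mathbb{F}_2^p$, and each such symbol assignment can be realized in exactly $(2^r-1)^\ell$ ways (the actual value of ${\bs x}_i$ is irrelevant once we know it is nonzero, because the cluster is random). This explains the prefactor $(2^r-1)^\ell$ in \eqref{eq:Aell}.

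Next I would reduce the problem to a check-by-check count. Say $k$ edges are incident to the set $S$ of nonzero variable nodes. Given the graph, the number of ways to choose which $k$ of the $E$ sockets on the variable side belong to degree-profile-respecting nonzero nodes, and how they land among the check sockets, is governed by the configuration model: the $k$ "active" check sockets are a uniformly random size-$k$ subset of the $E$ check sockets, which is why a factor $1/\binom{E}{k}$ appears. On the variable side, the number of ways to pick a degree-$i$-respecting subset $S$ with $|S|=\ell$ using exactly $k$ edges is the coefficient of $s^\ell t^k$ in $\prod_{i\in\mathcal L}(1+st^i)^{L_iN}=P(s,t)^N$: each degree-$i$ variable node is either zero (contributing $1$) or nonzero (contributing $st^i$, i.e. one symbol, $i$ edges). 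On the check side, a degree-$j$ check node receives some number of active sockets; for the parity check $\sum h_{j,i}{\bs x}_i^T={\bs 0}$ to be satisfied, the sum of the corresponding uniform nonzero vectors in $\mathbb{F}_2^p$ must vanish. The probability that $m$ i.i.d.\ uniform nonzero vectors in $\mathbb{F}_2^p$ sum to zero is $\frac{1}{2^p-1}\cdot$ (something), and the generating function in $u$ marking active sockets for a single degree-$j$ check is exactly $f_j(u)$ in \eqref{eq:fj}; I would verify this by the elementary identity that the number of $m$-tuples of nonzero vectors summing to zero equals $\frac{1}{2^p}[(2^p-1)^m+(-1)^m(2^p-1)]$, which is $[u^m]$ of $\{1+(2^p-1)u\}^j+(2^p-1)(1-u)^j$ up to the normalization $2^{-p}$. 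Multiplying over all check nodes of each degree gives $Q(u)^{?}$—with the exponent $\kappa R_j$ per degree, i.e. $M R_j$ check nodes, consistent with $M=\kappa N$. The $u^k$ on one side must match the $t^k$ on the variable side (both count the active edges), and the normalization $(2^p-1)^k$ converts the "number of nonzero-vector labelings" into a probability; assembling these pieces and summing over the unknown $k\in[0;E]$ produces precisely \eqref{eq:Aell}.

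The step I expect to be the main obstacle is the careful bookkeeping of the configuration-model combinatorics: one must be precise about whether we are counting ordered edge-socket assignments or equivalence classes, and show that the graph-averaging exactly contributes the factor $\binom{E}{k}^{-1}$ while the per-check computation contributes $f_j(u)$ with the stated normalization, so that all the stray factors of $E!$, $(2^p-1)^k$, and powers of $2^p$ cancel correctly against $|\mathcal G|$ in \eqref{eq:num_code}. Concretely, I would argue: the number of graphs is $E!$ times the cluster count; the number of (graph, cluster, codeword) triples with $w({\bs x})=\ell$ and exactly $k$ active edges factors as $[\text{variable side: }\coef(P(s,t)^N,s^\ell t^k)]\times[\text{ways to route }k\text{ active sockets}]\times[\text{check side nonzero-labelings summing to zero per check}]\times(2^r-1)^\ell\times[\text{remaining free choices}]$, and the routing/free-choice factors combine with $E!$ and the cluster normalization to leave exactly $\binom{E}{k}^{-1}(2^p-1)^{-k}$. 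Once this cancellation is pinned down, \eqref{eq:Aell} follows by dividing by $|\mathcal G|$ and summing over $k$; the bit-weight version would then follow by refining $1+st^i$ to track bit weights via a factor like $\bigl(1+s\sum_{\text{bit patterns}}(\cdots)\bigr)$, but that is the subject of the next subsection.
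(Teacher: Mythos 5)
Your proposal is correct and follows essentially the same route as the paper: marking ``active'' edges, the variable-side generating function $\prod_i(1+st^i)^{L_iN}$ with the $(2^r-1)^\ell$ prefactor, Gallager's count $\frac{1}{2^p}[(2^p-1)^m+(-1)^m(2^p-1)]$ giving $f_j(u)$ on the check side, and the configuration-model permutation count $k!(E-k)!/E!=\binom{E}{k}^{-1}$ together with the per-active-edge factor $(2^p-1)^{-1}$ from the random full-rank cluster. The paper obtains that last factor by explicitly counting clusters mapping a fixed nonzero symbol to a fixed nonzero vector, namely $\prod_{i=1}^{r-1}(2^p-2^i)$ out of $\prod_{i=0}^{r-1}(2^p-2^i)$, which is exactly the uniformity-of-image argument you give in probabilistic form, so the cancellation you defer is precisely the paper's final division by $|\Ens|$ in Eq.~\eqref{eq:num_code}.
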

\begin{proof}
We follow the similar way in \cite[Theorem 1]{Kasai_nbw}.

We refer to an edge as {\it active} if the edge connects to a variable node to which is assigned a non-zero symbol.
We will count the average number of codewords $A(\ell,k)$ with symbol weight $\ell$ and the number of active edges $k$.

Firstly, we count the edge constellations satisfying the constraints of the variable nodes.
Consider a variable node $\mathtt{v}$ of degree $i$.
Define the parameter $\tilde{\ell}$ as 1 if a non-zero symbol is assigned to the variable node $\mathtt{v}$, and otherwise 0.
For a given $\tilde{\ell}\in[0,1]$ and $\tilde{k}\in[0,i]$, let $a_i(\tilde{\ell},\tilde{k})$ be the number of constellations of $\tilde{k}$ active edges which stem from a variable node of degree $i$.
The $i$ edges connected to $\mathtt{v}$ are active
if and only if a non-zero symbol is assigned to the variable node $\mathtt{v}$.
Hence, we have
\begin{align*}
 a_i(\tilde{\ell},\tilde{k})
  =
 \begin{cases}
  1, & \tilde{\ell} = 0,~\tilde{k}=0, \\
  2^r-1, & \tilde{\ell} = 1,~\tilde{k}=i, \\
  0, & \text{otherwise}.
 \end{cases}
\end{align*}
The generating function of $a_i(\tilde{\ell},\tilde{k})$ is written as follows:
\begin{align*}
 \sum_{\tilde{\ell},\tilde{k}} a_i(\tilde{\ell},\tilde{k}) s^{\tilde{\ell}}t^{\tilde{k}}
  =
 1+(2^r-1)st^{i}.
\end{align*}
Since there are $L_i N$ variable nodes of degree $i$, for a given $\ell$ and $k$,
the number of edge constellations satisfying constraints of the $N$ variable nodes in the Tanner graph is given by 
\begin{align*}
 \coef
 \Bigl(
  {\textstyle\prod_{i\in\mathcal{L}}} \{ 1+(2^r-1)st^i \}^{L_i N}, 
  s^{\ell} t^{k}
 \Bigr).
\end{align*}
This equation is simplified as follows:
\begin{align}
 (2^r-1)^{\ell}
 \coef
 \Bigl(
  {\textstyle\prod_{i\in\mathcal{L}}} (1+st^i)^{L_i N}, s^{\ell} t^{k}
 \Bigr).
 \label{eq:v_const}
\end{align}

Secondly, we count the edge constellations satisfying all the constraints of the check nodes.
Consider a check node $\mathtt{c}$ of degree $j$.
Let $m_j(\tilde{k})$ be the number of constellations of the $\tilde{k}$ active edges satisfying a check node of degree $j$.
In other words, 
\begin{align*}
 m_j(\tilde{k})
  =
 |\{ ({\bs y}_1,{\bs y}_2,\dots,{\bs y}_j)\in(\mathbb{F}_{2}^p)^j
   \mid 
  {\textstyle \sum_{i=1}^j}{\bs y}_i = {\bs 0},
  |\{i\mid {\bs y}_i \neq {\bs 0}\}| = \tilde{k} \}|.
\end{align*}
As in \cite[Eq.~(5.3)]{Gallager_LDPC}, $m_j(\tilde{k})$ is given as follows:
\begin{align*}
 m_j(\tilde{k})
  =
 \binom{j}{\tilde{k}}\frac{1}{2^p}
 \Bigl\{ (2^p-1)^{\tilde{k}} + (-1)^{\tilde{k}}(2^p-1) \Bigr\}
\end{align*}
The generating function of $m_j(\tilde{k})$ is written as follows:
\begin{align*}
 f_j(u)
  &=
 \sum_{\tilde{k}} m_j(\tilde{k}) u^{\tilde{k}} \\
  &=
 \frac{1}{2^p} \Bigl[ \{1+(2^p-1)u\}^j + (2^p-1)(1-u)^j \Bigr].
\end{align*}
Since there are $\kappa R_j N$ check nodes of degree $j$, 
for a given number of active edge $k$,
the number of the constellations satisfying all the constraints of the check nodes is given as:
\begin{align}
 \coef \Bigl( {\textstyle \prod_{j\in\mathcal{R}}} f_j(u)^{\kappa R_j N}, 
 u^k \Bigr).
 \label{eq:c_const}
\end{align}

Thirdly, we count the edge permutation and the number of clusters which satisfy the edge constraints.
For a given number of active edge $k$, the number of permutations of edges is given by $k!(E-k)!$ and the number of clusters which satisfy the edge constraints
is equal to 
$$\left\{\prod_{i=1}^{r-1}(2^p-2^i)\right\}^{k}
 \left\{\prod_{i=0}^{r-1}(2^p-2^i)\right\}^{E-k}.$$
Hence, for a given number of active edge $k$, the number of choices for the permutation of edges and clusters is 
\begin{align}
 k!(E-k)!
 \left\{ \prod_{i=1}^{r-1} (2^p-2^i)\right\}^{k}
 \left\{ \prod_{i=0}^{r-1} (2^p-2^i)\right\}^{E-k}.
 \label{eq:e_const}
\end{align}

By multiplying Eqs.~\eqref{eq:v_const}, \eqref{eq:c_const} and \eqref{eq:e_const}, and dividing by Eq.~\eqref{eq:num_code}, we obtain the number of codewords $A(\ell,k)$ with symbol weight $\ell$ and the number of active edges $k$ as 
\begin{align*}
 A(\ell,k)
  =
 \frac{(2^r-1)^{\ell}\coef\bigl((P(s,t)Q(u))^N,s^{\ell}t^{k}u^k\bigr)}
 {\binom{E}{k}(2^p-1)^k}.
\end{align*}
Since $A(\ell) = \sum_{k=0}^{E} A(\ell, k)$, we get Theorem \ref{the:1}.
\end{proof}

Theorem \ref{the:1} gives the following corollary.
\begin{corollary} \label{cor:1}
For the irregular non-binary cluster LDPC code ensemble 
$\Ens(N,p,r,\lambda,\rho)$,
the following equations hold:
\begin{align*}
 &A(0) = 1, \\
 &A(N) = 
  \frac{(2^r-1)^N
\prod_{j\in\mathcal{R}} \bigl\{ (2^p-1)^{j} + (-1)^{j}(2^p-1) \bigr\}^{\kappa R_j N}}
{(2^p-1)^E (2^p)^{\kappa N}}.
\end{align*}
\end{corollary}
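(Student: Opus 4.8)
The plan is to read off both identities from Theorem~\ref{the:1} by specializing the symbol weight $\ell$ to its two extreme values $\ell=0$ and $\ell=N$; in each case the coefficient extraction in \eqref{eq:Aell} collapses to a single surviving value of $k$, so little real work remains. (The identity $A(0)=1$ of course also just records that the all-zero word is the unique codeword of symbol weight $0$.)

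For $A(0)$ I would set $\ell=0$ in \eqref{eq:Aell}. The factor $P(s,t)^N=\prod_{i\in\mathcal{L}}(1+st^i)^{L_iN}$ has constant term $1$ in $s$, and $Q(u)$ carries no $t$; hence $\coef\bigl((P(s,t)Q(u))^N,s^0t^ku^k\bigr)$ vanishes unless $k=0$, in which case it equals $\coef(Q(u)^N,u^0)=\prod_{j\in\mathcal{R}}f_j(0)^{\kappa R_jN}$. By \eqref{eq:fj}, $f_j(0)=\tfrac1{2^p}\{1+(2^p-1)\}=1$, so the numerator is $1$; the $k=0$ denominator $\binom{E}{0}(2^p-1)^0$ is also $1$, giving $A(0)=1$.

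For $A(N)$ I would set $\ell=N$. The $s$-degree of $P(s,t)^N$ is $\sum_{i\in\mathcal{L}}L_iN=N$, and its unique monomial of maximal $s$-degree is $\prod_{i\in\mathcal{L}}(st^i)^{L_iN}=s^Nt^{E}$, using $\sum_{i\in\mathcal{L}}iL_iN=E$. Similarly $Q(u)^N=\prod_{j\in\mathcal{R}}f_j(u)^{\kappa R_jN}$ has $u$-degree $\sum_{j\in\mathcal{R}}j\kappa R_jN=E$. Therefore only $k=E$ contributes to \eqref{eq:Aell}, with $\binom{E}{E}=1$, and $\coef\bigl((P(s,t)Q(u))^N,s^Nt^Eu^E\bigr)$ equals the $u^E$-coefficient of $Q(u)^N$, i.e. the product $\prod_{j\in\mathcal{R}}c_j^{\kappa R_jN}$ of leading coefficients, where $c_j=\tfrac1{2^p}\{(2^p-1)^j+(-1)^j(2^p-1)\}$ is the leading coefficient of $f_j$ read off from \eqref{eq:fj}. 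Substituting back into \eqref{eq:Aell} and pulling the denominators $2^p$ together through $\sum_{j\in\mathcal{R}}\kappa R_jN=\kappa N$ yields exactly the claimed expression for $A(N)$.

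The derivation is essentially substitution, so the only point deserving attention is the exponent bookkeeping: the identities $\sum_{i\in\mathcal{L}}iL_iN=E$, $\sum_{j\in\mathcal{R}}j\kappa R_jN=E$, and $\sum_{j\in\mathcal{R}}\kappa R_jN=\kappa N$. These follow immediately from the definitions in Section~\ref{sec:pre}: $\sum_{i}iL_i=(\int_0^1\lambda(x)\diff{x})^{-1}=E/N$, $\sum_{j}jR_j=(\int_0^1\rho(x)\diff{x})^{-1}$ together with $\kappa=\int_0^1\rho(x)\diff{x}/\int_0^1\lambda(x)\diff{x}$, and $\sum_{j}R_j=1$. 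I expect this to be the only (mild) obstacle; everything else is direct coefficient extraction from Theorem~\ref{the:1}.
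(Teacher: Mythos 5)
Your proposal is correct and matches the paper's intent: the paper states Corollary~\ref{cor:1} as an immediate consequence of Theorem~\ref{the:1} (with no written proof), and your specialization to $\ell=0$ and $\ell=N$, isolating the single surviving terms $k=0$ and $k=E$ and using $\sum_i iL_i N=E$, $\sum_j j\kappa R_j N=E$, $\sum_j \kappa R_j N=\kappa N$, is exactly the intended substitution argument.
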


\subsection{Bit Codeword Weight Distribution}
In a similar way to the average symbol weight distribution, we are able to derive the average bit weight distribution for the irregular non-binary cluster LDPC code ensemble $\Ens(N,r,p,\lambda,\rho)$.
At first, we consider a variable node of degree $i$.
For a given bit weight $\tilde{\ell}\in[0,r]$, let $a_{\mathrm{b},i}(\tilde{\ell},\tilde{k})$ be the number of constellations of $\tilde{k}$ active edges which stem from a variable node of degree $i$.
From the definition of active edges, we have
\begin{align*}
 a_{\mathrm{b},i}(\tilde{\ell},\tilde{k})
  =
 \begin{cases}
  1, & \tilde{\ell}=0, \tilde{k} = 0, \\
  \binom{r}{\tilde{\ell}}, & \tilde{\ell} \in[1;r], \tilde{k} = i, \\
  0, & \text{otherwise}.
 \end{cases}
\end{align*}
The generating function of $a_{\mathrm{b},i}(\tilde{\ell},\tilde{k})$ is given as:
\begin{align*}
 \sum_{\tilde{\ell},\tilde{k}} a_{\mathrm{b},i}(\tilde{\ell},\tilde{k}) 
  s^{\tilde{\ell}}t^{\tilde{k}}
  =
 1+\{ (1+s)^r-1 \} t^i.
\end{align*}
Since there are $L_i N$ variable nodes of degree $i$,
the number of constellations of $k$ active edges satisfying constraints of the $N$ variable nodes with bit weight $\ell$ is
\begin{align*}
 \coef
  \Bigl(
  {\textstyle\prod_{i\in\mathcal{L}}}  [1+\{ (1+s)^r-1 \} t^i]^{L_i N}, 
  s^{\ell}t^{k}
 \Bigr).
\end{align*}
By using this equation, in a similar way to proof of the average symbol weight distributions, we obtain the average number $A_{\mathrm{b}}(\ell)$ of codewords of bit weight $\ell$ as follows:
\begin{theorem} \label{the:2}
 Let $n = rN $ be the bit code length.
 Define $f_{j}(u)$ as in Eq.~\eqref{eq:fj}.
 The average number $A_{\mathrm{b}}(\ell)$ of codewords of bit weight $\ell$ for the irregular non-binary cluster LDPC code ensemble $\Ens(N,p,r,\lambda,\rho)$ is 
\begin{align*}
&A_{\mathrm{b}}(\ell)
  =
 \sum_{k=0}^{E}
 \frac{\coef\bigl((P_{\mathrm{b}}(s,t)Q_{\mathrm{b}}(u))^n,s^{\ell}t^{k}u^k\bigr)}
 {\binom{E}{k}(2^p-1)^k}, \\
&P_{\mathrm{b}}(s,t)
 :=
 \prod_{i\in\mathcal{L}}  [1+\{ (1+s)^r-1 \} t^i]^{L_i / r}
,\\
&Q_{\mathrm{b}}(u)
 :=
 \prod_{j\in\mathcal{R}} f_j(u)^{\kappa R_j / r}.
\end{align*}
\end{theorem}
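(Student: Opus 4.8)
The plan is to follow the proof of Theorem~\ref{the:1} essentially verbatim, changing only the variable-node contribution. Define $A_{\mathrm b}(\ell,k)$ to be the average number of codewords with bit weight $\ell$ whose induced set of active edges has size $k$, so that $A_{\mathrm b}(\ell)=\sum_{k=0}^{E}A_{\mathrm b}(\ell,k)$. The variable-node count has already been recorded just before the theorem statement: for given $\ell$ and $k$, the number of edge constellations satisfying all $N$ variable-node constraints with bit weight $\ell$ is $\coef\bigl(\prod_{i\in\mathcal{L}}[1+\{(1+s)^r-1\}t^i]^{L_i N},\,s^{\ell}t^{k}\bigr)$, in which the factor $(1+s)^r-1=\sum_{\tilde\ell=1}^{r}\binom{r}{\tilde\ell}s^{\tilde\ell}$ is precisely the generating function assigning weight $\binom{r}{\tilde\ell}$ to the $\binom{r}{\tilde\ell}$ nonzero symbols of bit weight $\tilde\ell$. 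Note that, in contrast with Theorem~\ref{the:1}, no separate $(2^r-1)^{\ell}$ prefactor appears, since here the enumeration of nonzero symbols is already bit-weight-graded inside this polynomial.

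The two remaining ingredients of the proof of Theorem~\ref{the:1}, namely the check-node constellation count \eqref{eq:c_const} and the edge-permutation-and-cluster count \eqref{eq:e_const}, carry over unchanged. The reason is that both depend only on the number $k$ of active edges and not on the particular nonzero symbols assigned to the variable nodes: the number of tuples $(\bs y_1,\dots,\bs y_j)\in(\mathbb{F}_2^p)^j$ summing to $\bs 0$ with exactly $\tilde k$ nonzero components equals $m_j(\tilde k)$ regardless of anything else, and for a fixed nonzero symbol $\bs x_i\in\mathbb{F}_2^r$ and a fixed nonzero target $\bs y_i\in\mathbb{F}_2^p$, choosing a basis of $\mathbb{F}_2^r$ whose first vector is $\bs x_i$ shows that the number of full-rank $p\times r$ clusters $h$ with $h\bs x_i^T=\bs y_i^T$ is $(2^p-2)(2^p-2^2)\cdots(2^p-2^{r-1})=\prod_{i=1}^{r-1}(2^p-2^i)$, independent of $\bs x_i$ and $\bs y_i$. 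Hence, exactly as in Theorem~\ref{the:1}, the number of admissible cluster assignments given $k$ is $\{\prod_{i=1}^{r-1}(2^p-2^i)\}^{k}\{\prod_{i=0}^{r-1}(2^p-2^i)\}^{E-k}$, and the number of edge permutations is $k!(E-k)!$.

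It then remains to assemble: multiply the (new) variable-node count, \eqref{eq:c_const} and \eqref{eq:e_const}, and divide by $|\Ens|$ from \eqref{eq:num_code}. The factor $k!(E-k)!$ cancels against $E!$ to leave $1/\binom{E}{k}$, and the cluster factors cancel against $\{\prod_{i=0}^{r-1}(2^p-2^i)\}^{E}$ to leave $1/(2^p-1)^{k}$, since $\prod_{i=0}^{r-1}(2^p-2^i)/\prod_{i=1}^{r-1}(2^p-2^i)=2^p-1$. Writing $n=rN$ and recognizing $\prod_{i\in\mathcal{L}}[1+\{(1+s)^r-1\}t^i]^{L_i N}=P_{\mathrm b}(s,t)^{n}$ and $\prod_{j\in\mathcal{R}}f_j(u)^{\kappa R_j N}=Q_{\mathrm b}(u)^{n}$, the two coefficient extractions (in the disjoint variable sets $\{s,t\}$ and $\{u\}$) merge into a single one over the product, yielding $A_{\mathrm b}(\ell,k)=\coef\bigl((P_{\mathrm b}(s,t)Q_{\mathrm b}(u))^{n},s^{\ell}t^{k}u^{k}\bigr)\big/\bigl(\binom{E}{k}(2^p-1)^{k}\bigr)$. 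Summing over $k\in[0;E]$ gives the stated formula for $A_{\mathrm b}(\ell)$.

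There is no genuinely hard step; the computation is a transcription of the proof of Theorem~\ref{the:1}. The one point deserving care is the invariance argument of the second paragraph---that the check-node count and, above all, the per-edge cluster count $\prod_{i=1}^{r-1}(2^p-2^i)$ do not depend on \emph{which} nonzero label sits at a variable node---because this is exactly what confines the entire symbol-value dependence to the variable-node generating function, and hence what legitimizes the substitution $1+(2^r-1)st^i\mapsto 1+\{(1+s)^r-1\}t^i$.
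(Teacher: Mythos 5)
Your proposal is correct and follows essentially the same route as the paper, which simply replaces the variable-node generating function $1+(2^r-1)st^i$ by $1+\{(1+s)^r-1\}t^i$ and reuses the check-node, permutation, and cluster counts from the proof of Theorem~\ref{the:1} verbatim. Your added justification that the per-active-edge cluster count $\prod_{i=1}^{r-1}(2^p-2^i)$ is independent of the particular nonzero symbol and nonzero check-side value is exactly the (implicit) fact the paper relies on, so there is no gap.
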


Theorem \ref{the:2} gives the following corollary.
\begin{corollary} \label{cor:2}
 For the irregular non-binary cluster LDPC code ensemble 
 $\Ens(N,p,r,\lambda,\rho)$,
the following equations hold:
\begin{align*}
 &A_{\mathrm{b}}(0) = 1, \\
 &A_{\mathrm{b}}(n) 
= 
\frac{\prod_{j\in\mathcal{R}}
\bigl\{ (2^p-1)^j+(-1)^j(2^p-1)\bigr\}^{\kappa R_j N}}{(2^p-1)^E (2^p)^{\kappa N}}
.
\end{align*}
\end{corollary}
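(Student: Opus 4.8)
The plan is to specialize Theorem~\ref{the:2} to the two extreme bit weights $\ell=0$ and $\ell=n$ and read off the surviving coefficients; no new idea is needed beyond bookkeeping. I would first record the two facts that drive everything. Since $P_{\mathrm{b}}$ involves only $s,t$ and $Q_{\mathrm{b}}$ only $u$, the coefficient splits as $\coef\bigl((P_{\mathrm{b}}(s,t)Q_{\mathrm{b}}(u))^n,s^{\ell}t^{k}u^{k}\bigr)=\coef\bigl(P_{\mathrm{b}}(s,t)^n,s^{\ell}t^{k}\bigr)\,\coef\bigl(Q_{\mathrm{b}}(u)^n,u^{k}\bigr)$; and, because the exponents $L_i/r$ and $\kappa R_j/r$ get multiplied by $n=rN$, one has $P_{\mathrm{b}}(s,t)^n=\prod_{i\in\mathcal{L}}[1+\{(1+s)^r-1\}t^i]^{L_iN}$ and $Q_{\mathrm{b}}(u)^n=\prod_{j\in\mathcal{R}}f_j(u)^{\kappa R_jN}$.

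For $A_{\mathrm{b}}(0)$ I would set $s=0$: then $P_{\mathrm{b}}(s,t)^n$ collapses to $1$, so $\coef(P_{\mathrm{b}}^n,s^0t^k)$ vanishes unless $k=0$, leaving only the $k=0$ term of the sum in Theorem~\ref{the:2}. Using $f_j(0)=1$ from \eqref{eq:fj} gives $Q_{\mathrm{b}}(0)^n=1$, and the $k=0$ term equals $1/(\binom{E}{0}(2^p-1)^0)=1$, hence $A_{\mathrm{b}}(0)=1$. This is just the statement that the all-zero bit assignment activates no edges.

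For $A_{\mathrm{b}}(n)$ the key step is to locate the top-degree coefficient in $s$. The $s$-degree of $P_{\mathrm{b}}(s,t)^n$ is $\sum_{i\in\mathcal{L}}rL_iN=rN=n$, and $s^n$ arises uniquely by taking the leading term $(s^r)^{L_iN}$ (coefficient $1$) in every factor, which simultaneously forces the companion power of $t$ to be $t^{\sum_{i}iL_iN}=t^{E}$; thus the only surviving index is $k=E$ and $\coef(P_{\mathrm{b}}^n,s^nt^{E})=1$. Symmetrically, the $u$-degree of $Q_{\mathrm{b}}(u)^n$ is $\sum_{j\in\mathcal{R}}j\kappa R_jN=E$, so $\coef(Q_{\mathrm{b}}^n,u^{E})$ is the product over $j$ of the $(\kappa R_jN)$-th power of the leading coefficient of $f_j(u)$, which from \eqref{eq:fj} is the coefficient of $u^j$, namely $\tfrac{1}{2^p}\{(2^p-1)^j+(-1)^j(2^p-1)\}$. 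Plugging these into Theorem~\ref{the:2}, dividing by $\binom{E}{E}(2^p-1)^{E}=(2^p-1)^{E}$, and using $\sum_{j\in\mathcal{R}}R_j=1$ to collapse $\prod_{j\in\mathcal{R}}(2^p)^{-\kappa R_jN}=(2^p)^{-\kappa N}$, produces the claimed formula for $A_{\mathrm{b}}(n)$.

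The argument is essentially mechanical; the only place that asks for a moment of care is the matching of the active-edge count $k$ to $E$ at weight $n$ and to $0$ at weight $0$ — i.e., that the extreme bit weights correspond exactly to all edges (respectively no edges) being active — together with verifying that the top-degree term of each generating polynomial is attained uniquely, so that no other monomial in $s,t,u$ contributes to the coefficient being extracted.
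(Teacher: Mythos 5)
Your proposal is correct and is exactly the argument the paper intends (Corollary~\ref{cor:2} is stated as an immediate consequence of Theorem~\ref{the:2} with no separate proof): specialize to $\ell=0$ and $\ell=n$, note that only $k=0$ (resp.\ $k=E$) survives, and read off the constant (resp.\ top-degree) coefficients of $P_{\mathrm{b}}^n$ and $Q_{\mathrm{b}}^n$, with the $(2^p)^{-\kappa N}$ factor coming from the leading coefficients of the $f_j$. Your bookkeeping ($\sum_i iL_iN=\sum_j j\kappa R_jN=E$, $\sum_j\kappa R_jN=\kappa N$) checks out, so there is nothing to add.
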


\section{Asymptotic Analysis \label{sec:gr}}
In this section, we investigate the asymptotic behavior of the average symbol and bit weight distributions for the non-binary cluster LDPC code ensembles in the limit of large code length.

\subsection{Growth rate}
We define
\begin{align*}
 &\gamma(\omega) 
  :=
 \lim_{N\to \infty} \frac{1}{N} \log_{2^r} A(\omega N)
  =
 \lim_{N\to \infty} \frac{1}{r N} \log_{2} A(\omega N), \\
 &\gamma_{\mathrm{b}}(\omega_{\mathrm{b}} )
  :=
 \lim_{n\to \infty} \frac{1}{n} \log_2 A_{\mathrm{b}}(\omega_{\mathrm{b}} n),
\end{align*}
and refer to them as the {\it exponential growth rate} or simply {\it growth rate} of the average number of codewords in terms of symbol and bit weight, respectively.
To simplify the notation, we denote $\log_2(\cdot)$ as $\log(\cdot)$.

With the growth rate, we can roughly estimate the average number of codewords of symbol weight $\omega N$ (resp.\ bit weight $\omega_{\mathrm{b}} n$) by
\begin{align*}
 A(\omega N) \sim (2^r)^{\gamma(\omega)N}, \quad
 (\text{resp.}~~
 A_{\mathrm{b}}(\omega_{\mathrm{b}} n) \sim 2^{\gamma_{\mathrm{b}}(\omega_{\mathrm{b}})n},
 )
\end{align*}
where $a_N\sim b_N$ means that $\lim_{N\to\infty}(1/N)\log a_N/ b_N = 0$.

\subsubsection{Growth Rate of Symbol Weight Distribution}
Since the number of terms in Eq.~\eqref{eq:Aell} is equal to $E+1$,
we get
\begin{align*}
 \max_{k\in[0;E]} A(\ell,k)
  \le
 A(\ell)
  \le
 (E+1) \max_{k\in[0;E]} A(\ell,k).
\end{align*}
Therefore, we have
\begin{align*}
 \lim_{N\to\infty} \frac{1}{r N}\log A(\ell)
  =
 \lim_{N\to\infty} \frac{1}{r N}\max_{k\in[0;E]} \log A(\ell,k)
\end{align*}
To calculate this equation, we introduce the following lemma.
\begin{lemma}{{\rm \cite[Theorem 2]{1302293} }} \label{lem:1}
Let $\gamma>0$ be some rational number and
let $p(x_1,x_2,\dots,x_m)$ be a function such that 
$p(x_1,x_2,\dots,x_m)^{\gamma}$ is a multivariate polynomial 
with non-negative coefficients.
Let $\alpha_k >0$ be some rational numbers for $k\in[1;m]$ and
let $n_i$ be the series of all indexes $j$ such that $j/\gamma$ is an
integer and 
$\coef(p(x_1,\dots,x_m)^{j},x_1^{\alpha_1 j} \cdots x_m^{\alpha_m j})\neq 0$.
Then 
\begin{align}
 \lim_{i\to\infty}&\frac{1}{n_i}\log 
  \coef(p(x_1,\dots,x_m)^{n_i}, 
        (x_1^{\alpha_1}\cdots x_m^{\alpha_m})^{n_i})
 =
  \inf_{x_1,\dots,x_m > 0} \log 
   \frac{p(x_1,\dots,x_m)}{x_1^{\alpha_1}\cdots x_m^{\alpha_m}}.
 \notag
\end{align}
A point $(x_1,\dots,x_m)$ achieves the minimum of the function
$
 p(x_1,\dots,x_m)/(x_1^{\alpha_1}\dots x_m^{\alpha_m}),
$
if and only if it satisfies the following equation for all $k\in[1;m]$:
\begin{align}
 x_k\pderi{p(x_1,\dots,x_m)^{\gamma}}{x_k} 
 -\gamma\alpha_k p(x_1,\dots,x_m)^{\gamma}
  =
 0. \notag
\end{align}
\end{lemma}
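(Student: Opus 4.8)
The plan is to read the coefficient extraction as a weighted--walk counting problem and to sandwich it between an elementary upper bound obtained by evaluating the polynomial at a positive point and a lower bound obtained by exponential tilting (change of measure), exactly as in the large--deviations argument of \cite{1302293}. \textbf{Reduction.} Since $\gamma$ is rational, I would work along the subsequence $n_i$ and set $N:=n_i/\gamma\in\mathbb{N}$, $q(x_1,\dots,x_m):=p(x_1,\dots,x_m)^{\gamma}$ (a genuine polynomial with non--negative coefficients) and $\beta_k:=\gamma\alpha_k$, so that
\[
 \coef\bigl(p^{n_i},(x_1^{\alpha_1}\cdots x_m^{\alpha_m})^{n_i}\bigr)
  =\coef\bigl(q^{N},(x_1^{\beta_1}\cdots x_m^{\beta_m})^{N}\bigr),
\]
and the claimed limit becomes $\tfrac1\gamma\inf_{x>0}\log\bigl(q(x)/\prod_k x_k^{\beta_k}\bigr)$. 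Writing $q(x)=\sum_{\mu\in S}c_\mu\prod_k x_k^{\mu_k}$ with $S$ finite and every $c_\mu>0$, one has $\coef(q^{N},\beta N)=\sum c_{\mu^{(1)}}\cdots c_{\mu^{(N)}}$, the sum over all $(\mu^{(1)},\dots,\mu^{(N)})\in S^{N}$ with $\sum_t\mu^{(t)}=\beta N$.

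\textbf{Upper bound.} For any $x$ in the positive orthant all coefficients of $q^{N}$ are non--negative, so the single term $\coef(q^{N},\beta N)\prod_k x_k^{\beta_k N}$ is at most $q(x)^{N}$; hence $\tfrac1N\log\coef(q^{N},\beta N)\le\log\bigl(q(x)/\prod_k x_k^{\beta_k}\bigr)$, and taking the infimum over $x>0$ gives $\limsup_i\tfrac1{n_i}\log\coef\le\tfrac1\gamma\inf_{x>0}\log\bigl(q(x)/\prod_k x_k^{\beta_k}\bigr)$. \textbf{Lower bound.} For fixed $x>0$, introduce the probability vector $P_x(\mu):=c_\mu\prod_k x_k^{\mu_k}/q(x)$ on $S$; if $\Xi_1,\dots,\Xi_N$ are i.i.d.\ with law $P_x$, then $\Pr[\sum_t\Xi_t=\beta N]=\coef(q^{N},\beta N)\prod_k x_k^{\beta_k N}/q(x)^{N}$, i.e.
\[
 \tfrac1N\log\coef(q^{N},\beta N)
  =\log\frac{q(x)}{\prod_k x_k^{\beta_k}}
  +\tfrac1N\log\Pr\Bigl[{\textstyle\sum_t}\Xi_t=\beta N\Bigr].
\]
It then suffices to pick $x=x^{\ast}$ with $\mathbb{E}_{x^{\ast}}[\Xi_1]=\beta$, for then a local central limit theorem yields $\Pr[\sum_t\Xi_t=\beta N]=\Theta(N^{-d/2})$, with $d$ the rank of the lattice generated by $S-S$, so the last term vanishes in the limit and the two bounds meet. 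Such an $x^{\ast}$ exists because along $n_i$ the coefficient is non--zero by hypothesis, so $\beta$ lies in the convex hull of $S$; when $\beta$ is in the relative interior of that hull the strictly convex map $\theta\mapsto\log q(e^{\theta_1},\dots,e^{\theta_m})-\sum_k\beta_k\theta_k$ has a unique minimizer $\theta^{\ast}$, and $x^{\ast}_k:=e^{\theta^{\ast}_k}$ satisfies $x^{\ast}_k\partial_k q(x^{\ast})=\beta_k q(x^{\ast})$, while a boundary $\beta$ is reduced to this case by replacing $q$ with its restriction to the minimal face of the Newton polytope containing $\beta$.

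\textbf{Critical--point characterization.} A positive $x$ minimizes $p(x)/\prod_k x_k^{\alpha_k}$ iff it minimizes the $\gamma$--th power $q(x)/\prod_k x_k^{\beta_k}$, iff it is stationary for $\log q-\sum_k\beta_k\log x_k$, i.e.\ $x_k\partial_k q(x)-\beta_k q(x)=0$ for all $k$; substituting $q=p^{\gamma}$ and $\beta_k=\gamma\alpha_k$ gives $x_k\,\partial_k p(x)^{\gamma}-\gamma\alpha_k\,p(x)^{\gamma}=0$, as asserted, and convexity makes this sufficient as well as necessary. \textbf{Main obstacle.} The delicate part is the lower bound, specifically the use of the local limit theorem: one must verify the lattice/periodicity hypothesis --- that $\beta N$ genuinely lies in the lattice generated by $S$, which is exactly what the restriction to the subsequence $n_i$ buys --- and one must treat the case where $\beta$ sits on the boundary of the Newton polytope of $q$, where the optimal $x^{\ast}$ runs to the boundary of the orthant and the argument has to be carried out on the appropriate face.
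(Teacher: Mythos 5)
The paper never proves this lemma: it is quoted, with citation, from \cite{1302293}, so there is no in-paper argument to compare yours against line by line. Judged on its own merits, your proof strategy is sound and is essentially the classical asymptotic-enumeration argument: the evaluation bound $\coef(q^N,x^{\beta N})\,\prod_k x_k^{\beta_k N}\le q(x)^N$ for the upper bound, exponential tilting plus a lattice local limit theorem for the lower bound, and convex duality in the variables $\theta_k=\log x_k$ for the stationarity characterization $x_k\partial_{x_k}p^{\gamma}-\gamma\alpha_k p^{\gamma}=0$. The reduction $q=p^{\gamma}$, $\beta=\gamma\alpha$, $N=n_i/\gamma$ is exactly what the rationality hypothesis and the subsequence $n_i$ are for, and your observation that $\coef(q^N,x^{\beta N})\neq 0$ along the subsequence is what legitimizes the local CLT at the point $\beta N$ is correct.

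The one substantive hole is the boundary case that you flag but do not carry out. If $\beta$ lies on a proper face $F$ of the Newton polytope of $q$, two things must be checked for your plan to close: (i) every decomposition $\beta N=\sum_t\mu^{(t)}$ with $\mu^{(t)}$ in the support $S$ uses only exponents in $F$ (true because $F$ is a face), so $\coef(q^N,x^{\beta N})=\coef(q_F^N,x^{\beta N})$ where $q_F$ keeps only the terms on $F$; and (ii) $\inf_{x>0}q(x)/\prod_k x_k^{\beta_k}=\inf_{x>0}q_F(x)/\prod_k x_k^{\beta_k}$, since otherwise the face-restricted lower bound does not meet the stated infimum. Here $\ge$ follows from termwise nonnegativity ($q\ge q_F$ on the positive orthant), and $\le$ follows by moving $x=e^{\theta+sa}$ with $a$ the normal of a supporting hyperplane of $F$ and $s\to\infty$, which kills the off-face terms; without this step the statement you prove in the boundary case is not the one claimed. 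Two minor repairs: the map $\theta\mapsto\log q(e^{\theta})-\sum_k\beta_k\theta_k$ is strictly convex only when $S$ affinely spans $\mathbb{R}^m$, so uniqueness of $\theta^{*}$ can fail, but only existence (guaranteed for $\beta$ in the relative interior) is needed; and you should note explicitly that the hypothesis makes the subsequence $n_i$ infinite, since otherwise the limit is vacuous.
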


From Theorem \ref{the:1} and Lemma \ref{lem:1}, we obtain the following theorem.
\begin{theorem} \label{the:3}
 Define $\omega = \ell/N$, $\beta := k/N$ and $\epsilon := E/N$.
 The growth rate $\gamma(\omega)$ of the average number of codewords of normalized symbol weight $\omega$ for the irregular non-binary cluster LDPC code ensemble $\Ens(N,p,r,\lambda,\rho)$ with sufficiently large $N$
is given by, for $0<\omega<1$,
\begin{align} 
\gamma(\omega)
 &=
 \sup_{\beta>0} \inf_{s>0,t>0,u>0}
 \frac{1}{r}  \biggl[
   \log P(s,t)
  +\log Q(u) 
  -\epsilon h \biggl(\frac{\beta}{\epsilon}\biggr)
 -\beta\log (t u(2^p-1))
 -\omega\log \biggl(\frac{s}{2^r-1}\biggr)
 \biggr] \notag \\
&=: 
 \sup_{\beta>0} \inf_{s>0,t>0,u>0} \gamma(\omega,\beta,s,t,u) \notag \\
&=:
 \sup_{\beta>0}\gamma(\omega,\beta), \label{eq:gr_S}
\end{align}
where $h(x) := - x \log x - (1-x) \log (1-x)$ for $0<x<1$.
 A point $(s,t,u)$  which achieves the minimum of the function
 $\gamma(\omega, \beta, s, t, u)$
 is given in a solution of the following equations:
\begin{align}
\omega
 &=
 \frac{s}{P}\pderi{P}{s}
  =
 \sum_{i\in\mathcal{L}} L_i \frac{s t^i}{1+s t^i}, \label{eq:Ps}\\
\beta
 &=
 \frac{t}{P}\pderi{P}{t}
  =
 \sum_{i\in\mathcal{L}} L_i \frac{i s t^i}{1+s t^i}, \label{eq:Pt} \\
\beta
 &=
  \frac{u}{Q}\pderi{Q}{u}
 =
 \sum_{j\in\mathcal{R}} \kappa R_j
 \frac{u}{f_j(u)}\pderi{f_j}{u}(u),
  \label{eq:Qu}
\end{align}
where
\begin{align*}
\pderi{f_j}{u}(u)
 =
j\frac{2^p-1}{2^p} [\{1+(2^p-1)u\}^{j-1}-(1-u)^{j-1}].
\end{align*}
The point $\beta$ which gives the maximum of $\gamma(\omega,\beta)$ needs to satisfy the stationary condition
\begin{align}
 \beta
  =
 (2^p-1)t u (\epsilon-\beta). \label{eq:st_S}
\end{align}
\end{theorem}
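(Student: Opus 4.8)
The plan is to read off the exact expression for $A(\ell,k)$ from the proof of Theorem \ref{the:1} and analyze each of its factors on the exponential scale. First I would use the sandwich $\max_{k\in[0;E]}A(\ell,k)\le A(\ell)\le(E+1)\max_{k\in[0;E]}A(\ell,k)$ already recorded in the text: since $E=\epsilon N$, the factor $E+1$ is polynomial in $N$ and contributes nothing to $\tfrac{1}{rN}\log A(\ell)$, so $\gamma(\omega)=\lim_{N\to\infty}\tfrac{1}{rN}\max_{k}\log A(\omega N,k)$, and, once the function $\beta\mapsto\gamma(\omega,\beta)$ is known to be continuous, the discrete maximum over $k=\beta N$ converges to the supremum over $\beta>0$. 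I would then evaluate $\tfrac1N\log A(\omega N,\beta N)$ termwise: Stirling gives $\tfrac1N\log\binom{E}{k}=\tfrac1N\log\binom{\epsilon N}{\beta N}\to\epsilon h(\beta/\epsilon)$; the powers $(2^r-1)^{\ell}$ and $(2^p-1)^{k}$ give $\omega\log(2^r-1)$ and $\beta\log(2^p-1)$; and the coefficient $\coef((P(s,t)Q(u))^N,s^{\omega N}t^{\beta N}u^{\beta N})$ is handled by Lemma \ref{lem:1}.

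To invoke Lemma \ref{lem:1} with $m=3$, variables $(s,t,u)$ and exponents $(\omega,\beta,\beta)$, I must exhibit a positive rational $\gamma$ for which $(P(s,t)Q(u))^{\gamma}$ is a multivariate polynomial with non-negative coefficients. This holds because $P(s,t)=\prod_{i\in\mathcal{L}}(1+st^i)^{L_i}$ and $Q(u)=\prod_{j\in\mathcal{R}}f_j(u)^{\kappa R_j}$ have non-negative rational exponents, and $f_j(u)=\sum_{\tilde k}m_j(\tilde k)u^{\tilde k}$ has non-negative coefficients since $m_j(\tilde k)\ge0$; taking $\gamma$ to clear all the denominators suffices. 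Lemma \ref{lem:1} then yields $\tfrac1N\log\coef(\cdots)\to\inf_{s,t,u>0}\log\frac{P(s,t)Q(u)}{s^\omega t^\beta u^\beta}$. Collecting all pieces, moving the $s,t,u$-independent terms inside the infimum, and grouping $-\omega\log s+\omega\log(2^r-1)=-\omega\log\frac{s}{2^r-1}$ and $-\beta\log t-\beta\log u-\beta\log(2^p-1)=-\beta\log(tu(2^p-1))$, I obtain $\lim\tfrac{1}{rN}\log A(\omega N,\beta N)=\inf_{s,t,u>0}\gamma(\omega,\beta,s,t,u)=:\gamma(\omega,\beta)$, hence $\gamma(\omega)=\sup_{\beta>0}\gamma(\omega,\beta)$, which is \eqref{eq:gr_S}.

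For the critical-point equations \eqref{eq:Ps}--\eqref{eq:Qu} I would apply the second part of Lemma \ref{lem:1}: a point $(s,t,u)$ minimizes $P(s,t)Q(u)/(s^\omega t^\beta u^\beta)$ iff $x_k\partial_{x_k}(PQ)^\gamma-\gamma\alpha_k(PQ)^\gamma=0$, which after dividing by $(PQ)^\gamma$ and using $\partial_{x_k}(PQ)^\gamma=\gamma(PQ)^{\gamma-1}\partial_{x_k}(PQ)$ reduces to $x_k\,\partial_{x_k}\log(PQ)=\alpha_k$. With $(x_k,\alpha_k)$ running over $(s,\omega),(t,\beta),(u,\beta)$ and $\log(PQ)=\sum_iL_i\log(1+st^i)+\sum_j\kappa R_j\log f_j(u)$, this gives exactly $\omega=\sum_iL_i\frac{st^i}{1+st^i}$, $\beta=\sum_iL_i\frac{ist^i}{1+st^i}$, and $\beta=\sum_j\kappa R_j\frac{u}{f_j(u)}\partial_u f_j(u)$, with $\partial_u f_j(u)=j\frac{2^p-1}{2^p}[\{1+(2^p-1)u\}^{j-1}-(1-u)^{j-1}]$ by direct differentiation of \eqref{eq:fj}. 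For the $\beta$-stationarity \eqref{eq:st_S}, I would differentiate $\gamma(\omega,\beta)$ in $\beta$; by the envelope principle the minimizing $(s,t,u)$ lies in the open positive orthant, depends smoothly on $\beta$, and the partials of $\gamma(\omega,\beta,s,t,u)$ in $s,t,u$ vanish there, so $\frac{\partial}{\partial\beta}\gamma(\omega,\beta)=\frac1r[-h'(\beta/\epsilon)-\log(tu(2^p-1))]$ using $\frac{\partial}{\partial\beta}\epsilon h(\beta/\epsilon)=h'(\beta/\epsilon)=\log\frac{\epsilon-\beta}{\beta}$; setting this to zero gives $\frac{\beta}{\epsilon-\beta}=tu(2^p-1)$, i.e.\ $\beta=(2^p-1)tu(\epsilon-\beta)$.

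The main obstacle I anticipate is the careful invocation of Lemma \ref{lem:1}: besides verifying the polynomiality/non-negativity hypothesis as above, one must reconcile the fact that Lemma \ref{lem:1} only asserts convergence along the subsequence of $N$ for which $\omega N,\beta N$ are integers and the relevant coefficient is nonzero with the statement's ``sufficiently large $N$''. This requires arguing that for $0<\omega<1$ and $\beta$ in the range dictated by \eqref{eq:Ps}--\eqref{eq:Qu} the coefficient is eventually nonzero, and that restricting to such $N$ does not alter the limit; the exchange of $\lim_N$ with $\max_k$ (equivalently, the continuity of $\gamma(\omega,\beta)$ in $\beta$ and the convergence of the grid maximum to the continuum supremum) is the remaining point needing care.
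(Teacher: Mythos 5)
Your proposal is correct and follows essentially the same route as the paper: the paper derives Theorem~\ref{the:3} exactly by the sandwich bound $\max_k A(\ell,k)\le A(\ell)\le (E+1)\max_k A(\ell,k)$, Stirling for $\binom{E}{k}$, Lemma~\ref{lem:1} applied to $\coef\bigl((P(s,t)Q(u))^N,s^{\omega N}t^{\beta N}u^{\beta N}\bigr)$ for the variational expression and the critical-point equations \eqref{eq:Ps}--\eqref{eq:Qu}, and differentiation of $\gamma(\omega,\beta)$ in $\beta$ for \eqref{eq:st_S}. Your added care about the non-negativity/polynomiality hypothesis of Lemma~\ref{lem:1} and the subsequence/continuity issue in passing from $\max_k$ to $\sup_{\beta>0}$ only fills in details the paper leaves implicit.
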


From Corollary \ref{cor:1} and the definition of growth rate,
we derive the growth rate of average number of codewords with $\omega = 0, 1$ as follows:
\begin{corollary} \label{cor:3}
 For the irregular non-binary cluster LDPC code ensemble $\Ens(N, p,r,\lambda,\rho)$ in the limit of large symbol code length $N$, the following equations hold:
\begin{align*}
 &\gamma(0) 
  = 
 0, \\
 &\gamma(1) 
  = 
 \frac{1}{r} \Biggl[
   \log (2^r-1)
  -\epsilon \log(2^p-1)
  -\kappa p 
  +\sum_{j\in\mathcal{R}}
   \kappa R_j \log \{ (2^p-1)^j + (-1)^j (2^p-1) \}
 \Biggr].
\end{align*}
Moreover, by letting $p,r$ tend to infinity with a fixed ratio,
we have
\begin{align*}
 \gamma(1) 
  \to 
 1-\frac{\kappa p}{r},
\end{align*}
namely, 
$\gamma(1)$ tends to the design rate.
\end{corollary}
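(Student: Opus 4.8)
The plan is to evaluate the growth rate $\gamma(\omega)$ of Theorem \ref{the:3} at the endpoints $\omega=0$ and $\omega=1$ using the explicit formula for $A(\ell)$ in Corollary \ref{cor:1}, rather than by solving the stationary equations \eqref{eq:Ps}--\eqref{eq:st_S}. For $\omega=0$, since $A(0)=1$ for every Tanner graph in the ensemble, we get $\gamma(0)=\lim_{N\to\infty}\frac{1}{rN}\log 1 = 0$ directly. For $\omega=1$, Corollary \ref{cor:1} gives a closed form for $A(N)$, so I would take $\frac{1}{rN}\log$ of that expression and pass to the limit. Writing $A(N)$ as
\begin{align*}
 A(N)=\frac{(2^r-1)^N\prod_{j\in\mathcal{R}}\bigl\{(2^p-1)^j+(-1)^j(2^p-1)\bigr\}^{\kappa R_j N}}{(2^p-1)^E(2^p)^{\kappa N}},
\end{align*}
and using $E=\epsilon N$ together with $\log(2^p)^{\kappa N}=\kappa p N$, dividing by $rN$ and letting $N\to\infty$ yields exactly the stated formula for $\gamma(1)$. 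The only subtlety is noting that all factors are positive (in particular $(2^p-1)^j+(-1)^j(2^p-1)>0$ for $p\ge 1$, $j\ge 2$), so the logarithm is well defined and the limit is a genuine equality, not merely a bound.

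For the last assertion, I would analyze the behavior of $\gamma(1)$ as $p,r\to\infty$ with $p/r$ fixed. Rewrite the bracket in $\gamma(1)$ by factoring $(2^p-1)^j$ out of the degree-$j$ term:
\begin{align*}
 \log\bigl\{(2^p-1)^j+(-1)^j(2^p-1)\bigr\}
 = j\log(2^p-1)+\log\Bigl(1+(-1)^j(2^p-1)^{1-j}\Bigr).
\end{align*}
Summing against $\kappa R_j$ and using $\sum_j j R_j=\bigl(\int_0^1\rho(x)\,dx\bigr)^{-1}$ so that $\sum_j \kappa R_j\, j = E/N = \epsilon$, the $j\log(2^p-1)$ contributions telescope against the $-\epsilon\log(2^p-1)$ term in $\gamma(1)$. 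What remains is
\begin{align*}
 \gamma(1)=\frac{1}{r}\Bigl[\log(2^r-1)-\kappa p+\sum_{j\in\mathcal{R}}\kappa R_j\log\bigl(1+(-1)^j(2^p-1)^{1-j}\bigr)\Bigr].
\end{align*}
Now I would bound each residual piece: $\frac{1}{r}\log(2^r-1)=1+\frac{1}{r}\log(1-2^{-r})\to 1$; the term $-\kappa p/r$ is exactly the constant $-\kappa p/r$; and the residual sum is $O\bigl(r^{-1}(2^p-1)^{-1}\bigr)\to 0$ since $j\ge 2$ forces $(2^p-1)^{1-j}\le (2^p-1)^{-1}$ and $\log(1+x)=O(x)$. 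Hence $\gamma(1)\to 1-\kappa p/r$, which is the design rate.

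The main obstacle — really the only nonroutine point — is the telescoping step: one must recognize that the degree-weighted sum $\sum_{j\in\mathcal{R}}\kappa R_j\, j$ equals $\epsilon=E/N$, which follows from the definitions $R_j=\rho_j/(j\int_0^1\rho)$, $\kappa=\int_0^1\rho/\int_0^1\lambda$, and $E=N/\int_0^1\lambda$. Once that identity is in hand, the leading $\log(2^p-1)$ terms cancel exactly and the limit falls out. Everything else is elementary asymptotics of $\log(1+x)$ for small $x$, which I would not write out in detail.
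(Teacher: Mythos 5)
Your proposal is correct and follows essentially the same route as the paper: Corollary \ref{cor:3} is obtained directly by substituting the closed forms $A(0)=1$ and $A(N)$ from Corollary \ref{cor:1} into the definition of $\gamma$, and the limit $\gamma(1)\to 1-\kappa p/r$ follows from exactly the kind of asymptotic bookkeeping you describe, with the identity $\sum_{j\in\mathcal{R}}\kappa R_j\, j=\epsilon$ making the $\log(2^p-1)$ terms cancel. The paper leaves these computations implicit, so your write-up simply fills in the details it omits.
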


For a fixed normalized symbol weight $\omega$, the intermediate variables $s,t,u$ and $\beta$ are derived from Eqs.~\eqref{eq:Ps}, \eqref{eq:Pt}, \eqref{eq:Qu} and \eqref{eq:st_S}.
Hence, the intermediate variables $s,t,u$ and $\beta$ are represented as functions of $\omega$. 
Thus, we denote those intermediate variables, by $s(\omega), t(\omega), u(\omega), \beta(\omega)$.

The derivation of $\gamma(\omega)$ in terms of $\omega$ is simply expressed as following lemma.
\begin{lemma} \label{lem:2}
 For $s>0$ such that Eqs.~\eqref{eq:Ps}, \eqref{eq:Pt}, \eqref{eq:Qu} and \eqref{eq:st_S} hold,
we have
 \begin{align*}
  \deri{\gamma}{\omega}(\omega)
   =
  -\frac{1}{r}\log \frac{s(\omega)}{2^r-1}.
 \end{align*}
\end{lemma}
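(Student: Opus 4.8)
The plan is to read Lemma~\ref{lem:2} as an envelope-theorem statement for the saddle-point formula \eqref{eq:gr_S}. Write $\gamma(\omega)=\gamma\bigl(\omega,\beta(\omega),s(\omega),t(\omega),u(\omega)\bigr)$, where $\beta(\omega),s(\omega),t(\omega),u(\omega)$ is the stationary point of $\gamma(\omega,\beta,s,t,u)$ determined by \eqref{eq:Ps}, \eqref{eq:Pt}, \eqref{eq:Qu} and \eqref{eq:st_S}. The first step is to verify that these four equations are exactly the vanishing of the partial derivatives $\pderi{}{s}\gamma$, $\pderi{}{t}\gamma$, $\pderi{}{u}\gamma$, $\pderi{}{\beta}\gamma$ of the multivariate function $\gamma(\omega,\beta,s,t,u)$. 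For the first three this is a one-line differentiation: $\pderi{}{s}\gamma$ pairs $\frac1P\pderi{P}{s}$ against $\frac{\omega}{s}$, giving \eqref{eq:Ps}, and similarly $\pderi{}{t}\gamma$ and $\pderi{}{u}\gamma$ give \eqref{eq:Pt} and \eqref{eq:Qu}. For the fourth, using $h'(x)=\log\frac{1-x}{x}$ one computes $\pderi{}{\beta}\bigl[\epsilon h(\beta/\epsilon)+\beta\log(tu(2^p-1))\bigr]=\log\frac{\epsilon-\beta}{\beta}+\log(tu(2^p-1))$, which vanishes precisely when $\beta=(2^p-1)tu(\epsilon-\beta)$, i.e.\ \eqref{eq:st_S}.

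The second step is the chain rule, with all partial derivatives evaluated at the stationary point:
\begin{align*}
 \deri{\gamma}{\omega}(\omega)
  =
 \pderi{\gamma}{\omega}
 +\pderi{\gamma}{\beta}\deri{\beta}{\omega}
 +\pderi{\gamma}{s}\deri{s}{\omega}
 +\pderi{\gamma}{t}\deri{t}{\omega}
 +\pderi{\gamma}{u}\deri{u}{\omega}.
\end{align*}
By the first step each of the last four terms contains a factor that is zero at the stationary point, so only the explicit dependence on $\omega$ survives. Since $\omega$ enters $\gamma(\omega,\beta,s,t,u)$ only through the single term $-\tfrac{\omega}{r}\log\frac{s}{2^r-1}$, we obtain $\pderi{\gamma}{\omega}=-\tfrac{1}{r}\log\frac{s(\omega)}{2^r-1}$, which is the asserted identity.

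The main obstacle is not the cancellation itself but the regularity that licenses it. One must know that the optimizing $(\beta,s,t,u)$ lies in the open set $\beta>0$, $s,t,u>0$ so that the first-order conditions hold with equality rather than as boundary inequalities; for $0<\omega<1$ this should follow from the behaviour of $P$, $Q$ and the $f_j$ near the boundary of that set. One must also know that this optimizer depends differentiably on $\omega$, which is an implicit-function-theorem argument applied to the system \eqref{eq:Ps}--\eqref{eq:st_S}; the required nonsingularity of the Jacobian amounts to saying that $(\beta(\omega),s(\omega),t(\omega),u(\omega))$ is a non-degenerate saddle point of $\gamma(\omega,\cdot,\cdot,\cdot,\cdot)$, and this is the one point where some care is needed. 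Granting this regularity, the proof is exactly the two-step envelope argument above and requires no further computation.
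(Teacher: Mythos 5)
Your proof is correct and is essentially the paper's argument: the paper likewise differentiates $\gamma(\omega,\hat\beta,\hat s,\hat t,\hat u)$ by the chain rule and uses Eqs.~\eqref{eq:Ps}, \eqref{eq:Pt}, \eqref{eq:Qu}, \eqref{eq:st_S} (i.e.\ the first-order stationarity conditions you verify) to cancel all terms involving $\deri{\hat s}{\omega},\deri{\hat t}{\omega},\deri{\hat u}{\omega},\deri{\hat\beta}{\omega}$, leaving only $-\frac{1}{r}\log\frac{s(\omega)}{2^r-1}$. Your explicit remarks on interior optimality and differentiability of the stationary point are regularity points the paper passes over silently, but the core envelope-theorem cancellation is the same.
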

\begin{proof}
We follow the similar way in \cite{Awano_multi}.

For a fixed $\omega$,
we denote the point achieving the maximum of $\gamma(\omega,\beta)$ by $\hat{\beta}$ and the point achieving the minimum of $\gamma(\omega,\hat{\beta},s,t,u)$ by $(\hat{s},\hat{t},\hat{u})$.
Then, $\gamma(\omega) = \gamma(\omega,\hat{\beta},\hat{s},\hat{t},\hat{u})$ holds 
and $\hat{\beta},\hat{s},\hat{t},\hat{u}$ satisfy Eqs.~\eqref{eq:Ps}, \eqref{eq:Pt}, \eqref{eq:Qu} and \eqref{eq:st_S}.
From \eqref{eq:gr_S}, we have
\begin{align}
 \deri{\gamma(\omega)}{\omega} 
  =&
 \deri{}{\omega}\gamma(\omega,\hat{\beta},\hat{s},\hat{t},\hat{u}) 
 \notag \\
 =&
 \frac{1}{r\ln 2} \Biggl[
   \frac{1}{P}\deri{P}{\omega} 
  -\frac{\omega}{      \hat{s}} \deri{\hat{s}}{\omega}
  -\frac{\hat{\beta}}{ \hat{t}} \deri{\hat{t}}{\omega} 
  +\frac{1}{Q}\deri{Q}{\omega}
  -\frac{\hat{\beta}}{ \hat{u}} \deri{\hat{u}}{\omega}
  +\deri{\hat{\beta}}{\omega} 
    \ln \frac{\epsilon-\hat{\beta}}{(2^p -1) \hat{\beta} \hat{t} \hat{u}}
  -\ln \frac{\hat{s}}{(2^r-1)}
 \Biggr].
 \label{eq:lem3-1}
\end{align}
From \eqref{eq:Ps} and \eqref{eq:Pt}, 
we have
\begin{align*}
 \frac{1}{P}\deri{P}{\omega}
  =&
  \frac{1}{P}\pderi{P}{\hat{s}} \deri{\hat{s}}{\omega} 
 +\frac{1}{P}\pderi{P}{\hat{t}} \deri{\hat{t}}{\omega}  \notag \\
  =&
  \frac{\omega}{\hat{s}} \deri{\hat{s}}{\omega} 
 +\frac{\hat{\beta}}{\hat{t}} \deri{\hat{t}}{\omega}.
\end{align*}
In other words, the sum of the first three terms of Eq.~\eqref{eq:lem3-1} is equal to 0.
Similarly, from \eqref{eq:Qu},
we have
\begin{align*}
 \frac{1}{Q}\deri{Q}{\omega}
  =&
 \frac{1}{Q}\pderi{Q}{\hat{u}} \deri{\hat{u}}{\omega} \\
  =&
 \frac{\hat{\beta}}{\hat{u}} \deri{\hat{u}}{\omega},
\end{align*}
i.e., the sum of forth and fifth terms of Eq.~\eqref{eq:lem3-1} is equal to 0.
From \eqref{eq:st_S}, we see that the sixth term of Eq.~\eqref{eq:lem3-1} is equal to 0.
This concludes the proof.
\end{proof}

\subsubsection{Growth Rate of Bit Weight Distribution}
In a similar way to symbol weight, we can derive the growth rate for the average number of codewords of bit weight.
Hence, we omit the proofs in this section.
\begin{theorem} \label{the:4}
 Define $\omega_{\mathrm{b}} = \ell/n$, $\beta_{\mathrm{b}} := k/n$ and $\epsilon_{\mathrm{b}} := E/n$.
 The growth rate $\gamma_{\mathrm{b}}(\omega_{\mathrm{b}})$ of the average number of codewords of normalized bit weight $\omega_{\mathrm{b}}$ for the irregular non-binary cluster LDPC code ensemble $\Ens(N,p,r,\lambda,\rho)$ with sufficiently large $N$
is given by, for $0<\omega_{\mathrm{b}}<1$,
\begin{align} 
\gamma_{\mathrm{b}}(\omega_{\mathrm{b}})
 &=
 \sup_{\beta_{\mathrm{b}}>0} \inf_{s>0,t>0,u>0}
 \biggl[
   \log P_{\mathrm{b}}(s,t)
  +\log Q_{\mathrm{b}}(u) 
 -\epsilon_{\mathrm{b}} 
    h \biggl(\frac{\beta_{\mathrm{b}}}{\epsilon_{\mathrm{b}}}\biggr)
 -\beta_{\mathrm{b}} \log (t u(2^p-1))
 -\omega_{\mathrm{b}}\log s
 \biggr] \notag \\
&=: 
 \sup_{\beta_{\mathrm{b}} >0} \inf_{s>0,t>0,u>0} 
 \gamma_{\mathrm{b}}(\omega_{\mathrm{b}},\beta_{\mathrm{b}},s,t,u) \notag \\
&=:
 \sup_{\beta_{\mathrm{b}} >0}
  \gamma_{\mathrm{b}}(\omega_{\mathrm{b}},\beta_{\mathrm{b}}). 
  \notag
\end{align}
 A point $(s,t,u)$  which achieves the minimum of the function
 $\gamma_{\mathrm{b}}(\omega_{\mathrm{b}}, \beta_{\mathrm{b}}, s, t, u)$
 is given in a solution of the following equations:
\begin{align}
&\omega_{\mathrm{b}}
 =
 \frac{s}{P_{\mathrm{b}}}\pderi{P_{\mathrm{b}}}{s}
 =
 \sum_{i\in\mathcal{L}}
 L_i \frac{(1+s)^{r-1}st^i}{1+\{(1+s)^r-1\}t^i}, \label{eq:Ps_B}\\
&\beta_{\mathrm{b}}
 =
 \frac{t}{P_{\mathrm{b}}}\pderi{P_{\mathrm{b}}}{t}
 =
 \sum_{i\in\mathcal{L}}
 \frac{L_i}{r} \frac{i\{(1+s)^r-1\}t^i}{1+\{(1+s)^r-1\}t^i},  \label{eq:Pt_B}\\
&\beta_{\mathrm{b}}
 =
  \frac{u}{Q_{\mathrm{b}}}\pderi{Q_{\mathrm{b}}}{u} 
 =
 \sum_{j\in\mathcal{R}} 
  \frac{\kappa R_j}{r} \frac{u}{f_j(u)}\frac{\partial f_j(u)}{\partial u}
 \label{eq:Qu_B}
\end{align}
The point $\beta_{\mathrm{b}}$ which gives the maximum of $\gamma_{\mathrm{b}}(\omega_{\mathrm{b}},\beta_{\mathrm{b}})$ needs to satisfy the stationary condition
\begin{align*}
 \beta_{\mathrm{b}}
  =
 (2^p-1)t u (\epsilon_{\mathrm{b}}-\beta_{\mathrm{b}}). 
\end{align*}
\end{theorem}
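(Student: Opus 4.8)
The plan is to mirror the proof architecture of Theorem \ref{the:3}, replacing symbol-weight bookkeeping with bit-weight bookkeeping throughout. The starting point is Theorem \ref{the:2}, which gives $A_{\mathrm{b}}(\ell) = \sum_{k=0}^{E} A_{\mathrm{b}}(\ell,k)$ with $A_{\mathrm{b}}(\ell,k) = \coef\bigl((P_{\mathrm{b}}(s,t)Q_{\mathrm{b}}(u))^n, s^{\ell}t^k u^k\bigr)/\bigl(\binom{E}{k}(2^p-1)^k\bigr)$. Since the sum has $E+1 = \epsilon_{\mathrm{b}} n + 1$ terms, the same sandwich bound $\max_k A_{\mathrm{b}}(\ell,k) \le A_{\mathrm{b}}(\ell) \le (E+1)\max_k A_{\mathrm{b}}(\ell,k)$ used for the symbol case gives $\lim_{n\to\infty}\frac{1}{n}\log A_{\mathrm{b}}(\ell) = \lim_{n\to\infty}\frac{1}{n}\max_{k}\log A_{\mathrm{b}}(\ell,k)$, so it suffices to evaluate the exponent of a single coefficient and then optimize over $\beta_{\mathrm{b}} = k/n$.

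First I would rewrite $\binom{E}{k}$ via Stirling: with $\epsilon_{\mathrm{b}} = E/n$ and $\beta_{\mathrm{b}} = k/n$ one has $\frac{1}{n}\log\binom{E}{k} \to \epsilon_{\mathrm{b}} h(\beta_{\mathrm{b}}/\epsilon_{\mathrm{b}})$, and $\frac{1}{n}\log(2^p-1)^k \to \beta_{\mathrm{b}}\log(2^p-1)$. Next I would apply Lemma \ref{lem:1} to the numerator with $m=3$ variables $(s,t,u)$, base polynomial $p(s,t,u) = P_{\mathrm{b}}(s,t)Q_{\mathrm{b}}(u)$, exponents $\alpha_1 = \omega_{\mathrm{b}}$, $\alpha_2 = \alpha_3 = \beta_{\mathrm{b}}$, so that $\frac{1}{n}\log\coef\bigl((P_{\mathrm{b}}Q_{\mathrm{b}})^n, s^{\ell}t^k u^k\bigr) \to \inf_{s,t,u>0}\log\frac{P_{\mathrm{b}}(s,t)Q_{\mathrm{b}}(u)}{s^{\omega_{\mathrm{b}}}t^{\beta_{\mathrm{b}}}u^{\beta_{\mathrm{b}}}}$. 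Here one must check that $P_{\mathrm{b}}^{\gamma}$ and $Q_{\mathrm{b}}^{\gamma}$ are genuine polynomials with non-negative coefficients for a suitable rational $\gamma$ clearing all the exponents $L_i/r$ and $\kappa R_j/r$: positivity of $f_j(u)$'s coefficients follows from the combinatorial identity $m_j(\tilde{k}) = \binom{j}{\tilde k}\frac{1}{2^p}\{(2^p-1)^{\tilde k}+(-1)^{\tilde k}(2^p-1)\}\ge 0$ already established, and positivity of the coefficients of $1+\{(1+s)^r-1\}t^i$ is immediate. Combining these three asymptotic pieces yields the bracketed expression defining $\gamma_{\mathrm{b}}(\omega_{\mathrm{b}},\beta_{\mathrm{b}},s,t,u)$, and the outer $\sup_{\beta_{\mathrm{b}}}$ comes from the maximization over $k$.

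For the stationarity equations \eqref{eq:Ps_B}--\eqref{eq:Qu_B} I would invoke the second half of Lemma \ref{lem:1}: the minimizer $(s,t,u)$ satisfies $x_k\partial_{x_k}(P_{\mathrm{b}}Q_{\mathrm{b}})^{\gamma} - \gamma\alpha_k(P_{\mathrm{b}}Q_{\mathrm{b}})^{\gamma} = 0$, which after dividing by $\gamma(P_{\mathrm{b}}Q_{\mathrm{b}})^{\gamma}$ becomes $\frac{s}{P_{\mathrm{b}}}\partial_s P_{\mathrm{b}} = \omega_{\mathrm{b}}$, $\frac{t}{P_{\mathrm{b}}}\partial_t P_{\mathrm{b}} = \beta_{\mathrm{b}}$, $\frac{u}{Q_{\mathrm{b}}}\partial_u Q_{\mathrm{b}} = \beta_{\mathrm{b}}$; carrying out the logarithmic differentiation of $P_{\mathrm{b}}(s,t) = \prod_i[1+\{(1+s)^r-1\}t^i]^{L_i/r}$ and $Q_{\mathrm{b}}(u) = \prod_j f_j(u)^{\kappa R_j/r}$ term by term produces exactly the stated right-hand sides, with the chain-rule factor $(1+s)^{r-1}$ appearing in \eqref{eq:Ps_B}. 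The stationary condition in $\beta_{\mathrm{b}}$ comes from setting $\partial_{\beta_{\mathrm{b}}}\gamma_{\mathrm{b}}(\omega_{\mathrm{b}},\beta_{\mathrm{b}}) = 0$; since at the interior optimum the envelope theorem kills the $s,t,u$-dependence, only the explicit $\beta_{\mathrm{b}}$-terms $-\epsilon_{\mathrm{b}}h(\beta_{\mathrm{b}}/\epsilon_{\mathrm{b}}) - \beta_{\mathrm{b}}\log(tu(2^p-1))$ contribute, and $h'(\beta_{\mathrm{b}}/\epsilon_{\mathrm{b}}) = \log\frac{\epsilon_{\mathrm{b}}-\beta_{\mathrm{b}}}{\beta_{\mathrm{b}}}$ gives $\beta_{\mathrm{b}} = (2^p-1)tu(\epsilon_{\mathrm{b}}-\beta_{\mathrm{b}})$.

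The main obstacle is not any single computation but the verification that Lemma \ref{lem:1}'s hypotheses hold along the relevant sequence: one needs the index sequence $n_i$ (values of $n = rN$ for which $\coef \ne 0$ and $\gamma n_i$ is integral) to be non-empty and cofinal, so that the limit is actually attained and equals the stated infimum, and one must confirm that $\omega_{\mathrm{b}}, \beta_{\mathrm{b}}, \epsilon_{\mathrm{b}}$ can be taken rational without loss (by continuity of $\gamma_{\mathrm{b}}$ in $\omega_{\mathrm{b}}$, which itself requires a mild argument). A secondary subtlety is the interchange of the per-$k$ limit with the $\max_k$: because the bound $(E+1)$ is only polynomial in $n$ while $A_{\mathrm{b}}(\ell,k)$ is exponential, this is harmless, exactly as in the symbol-weight case, so I would simply remark that the argument is identical to Theorem \ref{the:3} — consistent with the paper's stated intention to omit the proofs in this subsection.
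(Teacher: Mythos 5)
Your proposal is correct and follows exactly the route the paper intends: the paper omits the proof of Theorem \ref{the:4}, stating it is obtained "in a similar way" to Theorem \ref{the:3}, and your argument is precisely that mirrored derivation (sandwich bound over $k$, Stirling for $\binom{E}{k}$, Lemma \ref{lem:1} applied to $(P_{\mathrm{b}}Q_{\mathrm{b}})^{n}$, logarithmic differentiation for the stationarity equations, and the $\beta_{\mathrm{b}}$-stationarity via the envelope argument), with the computations checking out.
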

\begin{corollary} \label{cor:4}
 For the irregular non-binary cluster LDPC code ensemble $\Ens(N, p,r,\lambda,\rho)$ in the limit of large bit code length $n$, the following equations hold:
\begin{align*}
 &\gamma_{\mathrm{b}}(0) 
  = 
 0, \\
 &\gamma_{\mathrm{b}}(1) 
  = 
  -\epsilon_{\mathrm{b}} \log(2^p-1)
  -\kappa \frac{p}{r} 
  +\sum_{j\in\mathcal{R}}
   \frac{\kappa R_j}{r} \log \{ (2^p-1)^j + (-1)^j (2^p-1) \}.
\end{align*}
Moreover, by letting $p,r$ tend to infinity with fixed ratio,
we have
\begin{align*}
 \gamma_{\mathrm{b}}(1)
  \to
 -\frac{\kappa p}{r}.
\end{align*}
\end{corollary}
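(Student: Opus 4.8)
The plan is to bypass the variational characterization of Theorem~\ref{the:4} and instead read $\gamma_{\mathrm b}(0)$ and $\gamma_{\mathrm b}(1)$ directly off the closed-form values of $A_{\mathrm b}(0)$ and $A_{\mathrm b}(n)$ supplied by Corollary~\ref{cor:2}, exactly as Corollary~\ref{cor:3} is obtained from Corollary~\ref{cor:1}. Since $A_{\mathrm b}(0)=1$ for every $n$, the definition $\gamma_{\mathrm b}(\omega_{\mathrm b})=\lim_{n\to\infty}(1/n)\log A_{\mathrm b}(\omega_{\mathrm b}n)$ gives $\gamma_{\mathrm b}(0)=0$ immediately.

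For $\gamma_{\mathrm b}(1)$ I would take the formula for $A_{\mathrm b}(n)$ from Corollary~\ref{cor:2}, apply $\log=\log_2$, and divide by $n=rN$. First observe that $\log A_{\mathrm b}(n)$ is exactly linear in $N$ (each exponent $\kappa R_jN$, $E$, and $\kappa N$ scales with $N$), so $(1/n)\log A_{\mathrm b}(n)$ is already independent of $N$ and equals its own limit. The denominator factor $(2^p)^{\kappa N}$ contributes $-(\kappa N/(rN))\log(2^p)=-\kappa p/r$, using $\log(2^p)=p$ for base-two logarithms; the denominator factor $(2^p-1)^E$ contributes $-(E/(rN))\log(2^p-1)=-\epsilon_{\mathrm b}\log(2^p-1)$ by the definition $\epsilon_{\mathrm b}=E/n$; and the numerator contributes $\sum_{j\in\mathcal R}(\kappa R_j/r)\log\{(2^p-1)^j+(-1)^j(2^p-1)\}$. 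Summing the three contributions yields the stated expression for $\gamma_{\mathrm b}(1)$. (This tacitly uses $\mathcal R\subseteq\{2,3,\dots\}$, so that each bracket $(2^p-1)^j+(-1)^j(2^p-1)$ is strictly positive; a degree-one check node would make a bracket vanish and hence $A_{\mathrm b}(n)=0$.)

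For the regime in which $p,r\to\infty$ with $p/r$ held fixed, I would factor $(2^p-1)^j$ out of each bracket and split $\log\{(2^p-1)^j+(-1)^j(2^p-1)\}=j\log(2^p-1)+\log\!\bigl(1+(-1)^j(2^p-1)^{1-j}\bigr)$. The key bookkeeping identity is $\sum_{j\in\mathcal R}\kappa R_j j/r=\epsilon_{\mathrm b}$, which follows from counting edge-endpoints on the check side ($E=M\sum_{j\in\mathcal R}jR_j=\kappa N\sum_{j\in\mathcal R}jR_j$) together with $\epsilon_{\mathrm b}=E/(rN)$. Substituting the split into the formula for $\gamma_{\mathrm b}(1)$, the $j\log(2^p-1)$ terms sum to exactly $\epsilon_{\mathrm b}\log(2^p-1)$, which cancels the $-\epsilon_{\mathrm b}\log(2^p-1)$ term and leaves $\gamma_{\mathrm b}(1)=\sum_{j\in\mathcal R}(\kappa R_j/r)\log\bigl(1+(-1)^j(2^p-1)^{1-j}\bigr)-\kappa p/r$. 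Since each $j\ge 2$, the factors $(2^p-1)^{1-j}\to 0$ as $p\to\infty$ and the prefactors $\kappa R_j/r\to 0$ as $r\to\infty$, so the finite sum tends to $0$ and $\gamma_{\mathrm b}(1)\to-\kappa p/r$.

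I do not expect a genuine obstacle here: the argument is the bit-weight transcription of the proof of Corollary~\ref{cor:3}. The only points requiring attention are handling the $(2^p)^{\kappa N}$ factor correctly with base-two logarithms (so that it produces precisely $-\kappa p/r$ rather than a stray $\log 2$) and invoking the edge-counting identity $\sum_{j\in\mathcal R}\kappa R_j j/r=\epsilon_{\mathrm b}$ that makes the $\epsilon_{\mathrm b}\log(2^p-1)$ terms cancel in the limit; everything else is routine simplification.
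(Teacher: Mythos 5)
Your proposal is correct and is essentially the paper's own (omitted) argument: Corollary~\ref{cor:4} is obtained exactly as Corollary~\ref{cor:3}, by reading $\gamma_{\mathrm b}(0)$ and $\gamma_{\mathrm b}(1)$ off the closed forms in Corollary~\ref{cor:2} via $(1/n)\log A_{\mathrm b}(\cdot)$ with $n=rN$, and the asymptotics follow from the same split $\log\{(2^p-1)^j+(-1)^j(2^p-1)\}=j\log(2^p-1)+\log\bigl(1+(-1)^j(2^p-1)^{1-j}\bigr)$ together with the edge-count identity $\sum_{j\in\mathcal R}\kappa R_j j/r=\epsilon_{\mathrm b}$. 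Your remark about excluding degree-one check nodes is a sensible (tacit in the paper) hypothesis and does not change the argument.
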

\begin{lemma}
 For $s>0$ such that Eq.~\eqref{eq:Ps_B}, \eqref{eq:Pt_B} and \eqref{eq:Qu_B} hold, we have
 \begin{align*}
  \deri{\gamma_{\mathrm{b}}}{\omega_{\mathrm{b}}}(\omega_{\mathrm{b}})
   =
  -\log s(\omega_{\mathrm{b}}).
 \end{align*}
\end{lemma}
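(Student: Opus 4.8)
The plan is to prove this exactly as Lemma \ref{lem:2} is proved, replacing every symbol-weight object by its bit-weight analogue; the underlying mechanism is an envelope-type computation. For a fixed $\omega_{\mathrm{b}}$, let $\hat{\beta}_{\mathrm{b}}$ denote the point achieving the supremum of $\gamma_{\mathrm{b}}(\omega_{\mathrm{b}},\beta_{\mathrm{b}})$ and $(\hat{s},\hat{t},\hat{u})$ the point achieving the infimum of $\gamma_{\mathrm{b}}(\omega_{\mathrm{b}},\hat{\beta}_{\mathrm{b}},s,t,u)$, so that $\gamma_{\mathrm{b}}(\omega_{\mathrm{b}})=\gamma_{\mathrm{b}}(\omega_{\mathrm{b}},\hat{\beta}_{\mathrm{b}},\hat{s},\hat{t},\hat{u})$ and, viewed as differentiable functions $\hat{\beta}_{\mathrm{b}}(\omega_{\mathrm{b}}),\hat{s}(\omega_{\mathrm{b}}),\hat{t}(\omega_{\mathrm{b}}),\hat{u}(\omega_{\mathrm{b}})$, the optimizers satisfy Eqs.~\eqref{eq:Ps_B}, \eqref{eq:Pt_B}, \eqref{eq:Qu_B} together with the stationary condition $\beta_{\mathrm{b}}=(2^p-1)tu(\epsilon_{\mathrm{b}}-\beta_{\mathrm{b}})$.

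First I would differentiate $\gamma_{\mathrm{b}}(\omega_{\mathrm{b}})=\gamma_{\mathrm{b}}(\omega_{\mathrm{b}},\hat{\beta}_{\mathrm{b}},\hat{s},\hat{t},\hat{u})$ in $\omega_{\mathrm{b}}$ by the chain rule, producing (with an overall factor $1/\ln 2$) the seven-term expansion analogous to Eq.~\eqref{eq:lem3-1}: the terms $\frac{1}{P_{\mathrm{b}}}\deri{P_{\mathrm{b}}}{\omega_{\mathrm{b}}}$, $-\frac{\omega_{\mathrm{b}}}{\hat{s}}\deri{\hat{s}}{\omega_{\mathrm{b}}}$, $-\frac{\hat{\beta}_{\mathrm{b}}}{\hat{t}}\deri{\hat{t}}{\omega_{\mathrm{b}}}$ coming from the $P_{\mathrm{b}}$ part; the terms $\frac{1}{Q_{\mathrm{b}}}\deri{Q_{\mathrm{b}}}{\omega_{\mathrm{b}}}$ and $-\frac{\hat{\beta}_{\mathrm{b}}}{\hat{u}}\deri{\hat{u}}{\omega_{\mathrm{b}}}$ from the $Q_{\mathrm{b}}$ part; a single term proportional to $\deri{\hat{\beta}_{\mathrm{b}}}{\omega_{\mathrm{b}}}$ gathering the partial of $-\epsilon_{\mathrm{b}}h(\beta_{\mathrm{b}}/\epsilon_{\mathrm{b}})-\beta_{\mathrm{b}}\log(tu(2^p-1))$ with respect to $\beta_{\mathrm{b}}$; and the leftover $-\log\hat{s}$ coming from the explicit term $-\omega_{\mathrm{b}}\log s$. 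Then I would cancel: using Eqs.~\eqref{eq:Ps_B} and \eqref{eq:Pt_B} together with $\deri{P_{\mathrm{b}}}{\omega_{\mathrm{b}}}=\pderi{P_{\mathrm{b}}}{\hat{s}}\deri{\hat{s}}{\omega_{\mathrm{b}}}+\pderi{P_{\mathrm{b}}}{\hat{t}}\deri{\hat{t}}{\omega_{\mathrm{b}}}$ gives $\frac{1}{P_{\mathrm{b}}}\deri{P_{\mathrm{b}}}{\omega_{\mathrm{b}}}=\frac{\omega_{\mathrm{b}}}{\hat{s}}\deri{\hat{s}}{\omega_{\mathrm{b}}}+\frac{\hat{\beta}_{\mathrm{b}}}{\hat{t}}\deri{\hat{t}}{\omega_{\mathrm{b}}}$, so the first three terms sum to zero; similarly Eq.~\eqref{eq:Qu_B} makes the fourth and fifth terms cancel; and the $\deri{\hat{\beta}_{\mathrm{b}}}{\omega_{\mathrm{b}}}$ term vanishes because its coefficient, exactly as in Lemma \ref{lem:2}, is the logarithm of a ratio that equals $1$ by the stationary condition. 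What survives is $-\frac{1}{\ln 2}\ln\hat{s}=-\log s(\omega_{\mathrm{b}})$, as claimed.

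Because this is an envelope argument, no step is a genuine obstacle: the interior optimality of $(\hat{s},\hat{t},\hat{u})$ and of $\hat{\beta}_{\mathrm{b}}$ is precisely what forces the implicit-derivative terms to cancel, and one could shorten the hand computation by a direct appeal to Danskin's theorem applied to the inner $\inf$ and the outer $\sup$. The only point needing a little care is the bookkeeping of the $1/r$-factors relative to Lemma \ref{lem:2}: the definition of $\gamma_{\mathrm{b}}$ carries no $\log_{2^r}$ normalization, and the factor $(2^r-1)^{\ell}$ that appears in the symbol-weight count of Theorem \ref{the:1} has here been absorbed into $P_{\mathrm{b}}$ through the term $(1+s)^r$, so the answer is $-\log s$ rather than $-\frac{1}{r}\log\frac{s}{2^r-1}$. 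This is consistent with the endpoint values $\gamma_{\mathrm{b}}(0)=0$ and the expression for $\gamma_{\mathrm{b}}(1)$ in Corollary \ref{cor:4}.
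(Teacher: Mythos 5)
Your proposal is correct and follows exactly the route the paper intends: the paper omits this proof, stating it is obtained "in a similar way" to the symbol-weight case, and your argument is precisely the proof of Lemma~\ref{lem:2} transcribed to the bit-weight quantities, with the implicit-derivative terms cancelled via Eqs.~\eqref{eq:Ps_B}, \eqref{eq:Pt_B}, \eqref{eq:Qu_B} and the stationary condition in $\beta_{\mathrm{b}}$, leaving only $-\log\hat{s}$. Your remark on the normalization (no $1/r$ factor and no $(2^r-1)$ in the $-\omega_{\mathrm{b}}\log s$ term, since the binomial weight enumeration is absorbed into $P_{\mathrm{b}}$ via $(1+s)^r$) is exactly the right bookkeeping.
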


\subsection{Analysis of Small Weight Codeword}
In this section, we investigate the growth rate of the average number of codewords of symbol and bit weight with small $\omega$.

\begin{theorem} \label{the:5}
 For the irregular non-binary cluster LDPC code ensemble $\Ens(N, p,r,\lambda,\rho)$ with $\lambda_2 >0$, 
the growth rate $\gamma(\omega)$ of the average number of codewords in terms of symbol weight, in the limit of large symbol code length for small $\omega$, is given by
\begin{align}
 \gamma (\omega)
  =
 -\frac{\omega}{r}
  \log\biggl[ \frac{2^p-1}{(2^r-1)\lambda'(0)\rho'(1)}\biggr]
 +o(\omega),
\end{align}
where we denote $f(x) = o(g(x))$ if and only if $\lim_{x \searrow 0} \bigl|\frac{f(x)}{g(x)}\bigr| = 0$
and where $\lambda'(0)\rho'(1) = \lambda_2 \sum_{j\in\mathcal{R}} (j-1)\rho_j$.

\end{theorem}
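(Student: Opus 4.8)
The plan is to combine Lemma~\ref{lem:2} with a small-$\omega$ expansion of the stationary equations \eqref{eq:Ps}, \eqref{eq:Pt}, \eqref{eq:Qu}, \eqref{eq:st_S}. By Corollary~\ref{cor:3} we have $\gamma(0)=0$, so $\gamma(\omega)=-\frac1r\int_0^{\omega}\log\frac{s(\omega')}{2^r-1}\,d\omega'$; hence, if $s(\omega)$ converges to a finite positive limit $s_0$ as $\omega\searrow0$, then $\deri{\gamma}{\omega}(\omega)\to-\frac1r\log\frac{s_0}{2^r-1}=:c$, and averaging $\deri{\gamma}{\omega}$ over $[0,\omega]$ gives $\gamma(\omega)=c\,\omega+o(\omega)$. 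So the whole problem reduces to identifying $\lim_{\omega\searrow0}s(\omega)$.

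First I would argue that, on the branch realizing the $\sup_\beta\inf_{s,t,u}$ of Theorem~\ref{the:3}, letting $\omega\searrow0$ forces $t,u,\beta\searrow0$ while $s$ stays bounded away from $0$ and $\infty$: if $s$ is bounded, then \eqref{eq:Ps}, \eqref{eq:Pt} make every term $\frac{st^i}{1+st^i}$ vanish, which needs $t\to0$; then \eqref{eq:st_S} gives $\beta=(2^p-1)tu(\epsilon-\beta)\to0$, and \eqref{eq:Qu} together with $\pderi{f_j}{u}(0)=0$ gives $u\to0$. Next I would Taylor expand at the origin. Because $\lambda_2>0$, the degree-$2$ variable nodes dominate the sums in \eqref{eq:Ps}, \eqref{eq:Pt}, giving
\begin{align*}
 \omega = L_2\,s\,t^2\,(1+o(1)),\qquad
 \beta = 2L_2\,s\,t^2\,(1+o(1)) = 2\omega\,(1+o(1)),
\end{align*}
with $L_2=\tfrac12\lambda_2\epsilon$. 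From $f_j(u)=1+\tfrac12 j(j-1)(2^p-1)u^2+O(u^3)$ and $\pderi{f_j}{u}(u)=j(j-1)(2^p-1)u+O(u^2)$, together with $\sum_{j\in\mathcal{R}}\kappa R_j\,j(j-1)=\epsilon\,\rho'(1)$, equation \eqref{eq:Qu} becomes $\beta=(2^p-1)\epsilon\,\rho'(1)\,u^2\,(1+o(1))$, and \eqref{eq:st_S} becomes $tu=\frac{\beta}{(2^p-1)\epsilon}\,(1+o(1))$.

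The elimination is then elementary: computing $(tu)^2$ once as $t^2\cdot u^2$ from the expansions of \eqref{eq:Ps} and \eqref{eq:Qu}, once as the square of the expansion of \eqref{eq:st_S}, and substituting $\beta=2\omega\,(1+o(1))$, yields $s=\frac{2^p-1}{\lambda_2\rho'(1)}\,(1+o(1))$. Hence $s_0=\lim_{\omega\searrow0}s(\omega)=\frac{2^p-1}{\lambda'(0)\rho'(1)}$, since $\lambda'(0)=\lambda_2$ and $\lambda'(0)\rho'(1)=\lambda_2\sum_{j\in\mathcal{R}}(j-1)\rho_j$. Plugging $s_0$ into Lemma~\ref{lem:2} gives $c=-\frac1r\log\frac{2^p-1}{(2^r-1)\lambda'(0)\rho'(1)}$, and $\gamma(\omega)=c\,\omega+o(\omega)$ follows as above.

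The main obstacle is the step asserting that the branch of solutions of \eqref{eq:Ps}--\eqref{eq:st_S} relevant to the saddle-point value of Theorem~\ref{the:3} is exactly the one along which $t,u,\beta\to0$ with $s$ bounded (ruling out competing branches with $s\to0$ or $t\to\infty$), and controlling the $o(1)$ remainders uniformly enough that the implicit-function reasoning giving $s(\omega)\to s_0$ is rigorous; the remaining algebra is routine. It should also be stressed that $\lambda_2>0$ is precisely what lets the degree-$2$ terms dominate in \eqref{eq:Ps}, \eqref{eq:Pt}: if $\lambda_2=0$ the leading powers of $t$ change and $\gamma(\omega)$ is then typically negative near $\omega=0$, so the formula no longer holds.
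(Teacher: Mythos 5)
Your proposal is correct and follows essentially the same route as the paper's proof: combine Lemma~\ref{lem:2} with $\gamma(0)=0$ from Corollary~\ref{cor:3}, expand the stationary equations \eqref{eq:Ps}, \eqref{eq:Pt}, \eqref{eq:Qu}, \eqref{eq:st_S} for $\omega\searrow 0$ (using $\lambda_2>0$ so the degree-$2$ terms dominate), and conclude $s(\omega)\to(2^p-1)/(\lambda'(0)\rho'(1))$, which plugged into Lemma~\ref{lem:2} gives the stated slope. The only differences are cosmetic (you eliminate via $(tu)^2$ computed two ways instead of the paper's substitutions $t=\rho'(1)u+o(u)$ and $u=\lambda_2 s t/(2^p-1)+o(t)$), and the branch-selection/uniformity caveat you flag is likewise left implicit in the paper.
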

\begin{proof}
Note that for $\omega > 0$, 
\begin{align*}
 \gamma(\omega)
  =
 \gamma(0)
 +\omega \deri{^{+}\gamma}{\omega}(0) + o(\omega),
\end{align*}
where
\begin{align*}
 \deri{^{+}\gamma}{\omega}(0)
  :=
 \lim_{\omega \searrow 0} \frac{\gamma(\omega)-\gamma(0)}{\omega}
  =
 \lim_{\omega \searrow 0} \deri{\gamma}{\omega}(\omega).
\end{align*}
From Corollary \ref{cor:3}, we have $\gamma(0)=0$.
Hence, we will calculate $\lim_{\omega \searrow 0} \deri{\gamma}{\omega}(\omega)$.
From Lemma \ref{lem:2}, we have
\begin{align}
 \lim_{\omega \searrow 0} \deri{\gamma}{\omega}(\omega)
  =
 -\frac{1}{r} \lim_{\omega \searrow 0} \log \frac{s(\omega)}{2^r-1}.
 \label{eq:the5-0}
\end{align}
Recall that $s(\omega)$ satisfies Eqs.~\eqref{eq:Ps}, \eqref{eq:Pt}, \eqref{eq:Qu} and \eqref{eq:st_S}.
From Eq.~\eqref{eq:Ps}, for $\omega \searrow 0$, it holds that $s t^{i}\searrow 0$ for $i\in\mathcal{L}$.
By using this and Eq.~\eqref{eq:Pt}, we have $\beta \searrow 0$.
Notice that
\begin{align}
 f_j(u)
  =
 1+{\textstyle \binom{j}{2}}(2^p-1)u^2 + o(u^2).
 \label{eq:fj_approx}
\end{align}
By combining Eqs.~\eqref{eq:Qu} and \eqref{eq:fj_approx}, and $\beta\searrow 0$,
we get
\begin{align*}
 \beta 
  =
 \epsilon \rho'(1) (2^p -1) u^2 + o(u^2).
\end{align*}
Substituting this equation into Eq.~\eqref{eq:st_S}, we have
\begin{align}
 t
  =
 \rho'(1) u + o(u). \label{eq:th5-1}
\end{align}
The combination of this equation and $u\searrow 0$ gives $t\searrow 0$.
Since $t\searrow 0$ and $\lambda_2 >0$, from Eq.~\eqref{eq:Pt},
we get 
\begin{align*}
 \beta
  =
 \epsilon \lambda_2 s t^2 + o(t^2).
\end{align*}
Substituting this equation into Eq.~\eqref{eq:st_S}, we have
\begin{align}
 u
  =
 \frac{1}{2^p-1}\lambda_2 s t + o(t). \label{eq:th5-2}
\end{align}
Combining Eqs.~\eqref{eq:th5-1} and \eqref{eq:th5-2}, we have for $\omega \searrow 0$
\begin{align*}
 s(\omega)
  =
 (2^p-1) \frac{1}{\lambda'(0)\rho'(1)}.
\end{align*}
Thus, from Eq.~\eqref{eq:the5-0}, we obtain
\begin{align*}
 \lim_{\omega \searrow 0} \deri{\gamma}{\omega}(\omega)
   =
 \frac{1}{r}\log \biggl[ \frac{2^r-1}{2^p-1}\lambda'(0)\rho'(1) \biggr].
\end{align*}
This leads Theorem \ref{the:5}. 
\end{proof}

Similarly, the growth rate of the average number of codewords of bit weight with small weight $\omega_{\mathrm{b}}$ is given in the following theorem.
\begin{theorem} \label{the:6}
For the irregular non-binary cluster LDPC code ensemble $\Ens(N, p,r,\lambda,\rho)$ with $\lambda_2 >0$, 
the growth rate $\gamma_{\mathrm{b}}(\omega_{\mathrm{b}})$ of the average number of codewords in terms of bit weight, in the limit of large bit code length for small $\omega_{\mathrm{b}}$, is given by
\begin{align*}
 \gamma_{\mathrm{b}}(\omega_{\mathrm{b}})
  =
 -\omega_{\mathrm{b}}\log\biggl[ 
   \biggl( \frac{2^p-1}{\lambda'(0)\rho'(1)} + 1\biggr)^{1/r} - 1
  \biggr]
  +o(\omega_{\mathrm{b}}).
\end{align*}
\end{theorem}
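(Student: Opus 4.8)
The plan is to follow the argument of Theorem~\ref{the:5} almost verbatim, replacing the symbol-weight generating functions $P,Q$ by their bit-weight counterparts $P_{\mathrm b},Q_{\mathrm b}$ and using the bit-weight stationary equations \eqref{eq:Ps_B}, \eqref{eq:Pt_B}, \eqref{eq:Qu_B} together with the bit-weight stationary condition $\beta_{\mathrm b}=(2^p-1)tu(\epsilon_{\mathrm b}-\beta_{\mathrm b})$ from Theorem~\ref{the:4}. By Corollary~\ref{cor:4} we have $\gamma_{\mathrm b}(0)=0$, so the Taylor expansion $\gamma_{\mathrm b}(\omega_{\mathrm b})=\omega_{\mathrm b}\,\frac{d^{+}\gamma_{\mathrm b}}{d\omega_{\mathrm b}}(0)+o(\omega_{\mathrm b})$ reduces the theorem to computing the one-sided derivative at the origin. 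By the bit-weight analogue of Lemma~\ref{lem:2} (the last lemma of the previous subsection) this derivative equals $-\log s(0)$ with $s(0):=\lim_{\omega_{\mathrm b}\searrow 0}s(\omega_{\mathrm b})$, so everything comes down to evaluating $s(0)$.

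To find $s(0)$ I would run the same chain of asymptotic deductions. From \eqref{eq:Ps_B}, as $\omega_{\mathrm b}\searrow 0$ every summand must vanish, hence $\{(1+s)^r-1\}t^i\searrow 0$ for all $i\in\mathcal L$; substituting into \eqref{eq:Pt_B} gives $\beta_{\mathrm b}\searrow 0$. Feeding the expansion $f_j(u)=1+\binom{j}{2}(2^p-1)u^2+o(u^2)$ of \eqref{eq:fj_approx} into \eqref{eq:Qu_B}, and using the identity $\sum_{j\in\mathcal R}\frac{\kappa R_j}{r}j(j-1)=\epsilon_{\mathrm b}\rho'(1)$, yields $\beta_{\mathrm b}=\epsilon_{\mathrm b}\rho'(1)(2^p-1)u^2+o(u^2)$; inserting this into the stationary condition gives $t=\rho'(1)u+o(u)$, so $t\searrow 0$ as well. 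Since $\lambda_2>0$, the $i=2$ term dominates the right-hand side of \eqref{eq:Pt_B} as $t\searrow 0$, giving $\beta_{\mathrm b}=\epsilon_{\mathrm b}\lambda_2\{(1+s)^r-1\}t^2+o(t^2)$ (using $2L_2/r=\epsilon_{\mathrm b}\lambda_2$ and $\lambda'(0)=\lambda_2$); inserting this once more into the stationary condition gives $u=\frac{\lambda_2\{(1+s)^r-1\}}{2^p-1}\,t+o(t)$.

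Eliminating $t$ and $u$ between $t=\rho'(1)u+o(u)$ and the last relation forces, in the limit $\omega_{\mathrm b}\searrow 0$, the identity $\frac{\lambda_2\rho'(1)\{(1+s)^r-1\}}{2^p-1}=1$, that is $s(0)=\bigl(\frac{2^p-1}{\lambda'(0)\rho'(1)}+1\bigr)^{1/r}-1$, where $\lambda'(0)\rho'(1)=\lambda_2\sum_{j\in\mathcal R}(j-1)\rho_j$. Substituting $\frac{d^{+}\gamma_{\mathrm b}}{d\omega_{\mathrm b}}(0)=-\log s(0)$ into the Taylor expansion then yields the claimed formula.

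The main obstacle, exactly as in Theorem~\ref{the:5}, is pure bookkeeping: one must justify that along the chain $\omega_{\mathrm b}\to 0\Rightarrow\beta_{\mathrm b}\to 0\Rightarrow t\to 0,\ u\to 0$ the $i=2$ contribution genuinely dominates \eqref{eq:Pt_B} and that each substitution into the stationary condition preserves the stated leading order, so that the $o(\cdot)$ remainders do not contaminate $\lim s(\omega_{\mathrm b})$. The only structural difference from the symbol-weight proof is that the factor $s$ appearing in \eqref{eq:Pt} is replaced throughout by $(1+s)^r-1$ in \eqref{eq:Pt_B}, which is precisely why the answer comes out as $\{\,\cdot\,\}^{1/r}-1$ rather than as the bare ratio of Theorem~\ref{the:5}; I would also double-check the constant identities $2L_2/r=\epsilon_{\mathrm b}\lambda_2$ and $\sum_{j}\frac{\kappa R_j}{r}j(j-1)=\epsilon_{\mathrm b}\rho'(1)$, which follow immediately from the definitions of $L_i,R_j,\kappa$ and $\epsilon_{\mathrm b}=E/(rN)$.
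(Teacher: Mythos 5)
Your proposal is correct and is exactly the argument the paper intends: Theorem~\ref{the:6} is stated as the bit-weight analogue of Theorem~\ref{the:5}, proved by repeating that proof with $P_{\mathrm b},Q_{\mathrm b}$, Eqs.~\eqref{eq:Ps_B}--\eqref{eq:Qu_B}, the bit-weight stationary condition, and the bit-weight version of Lemma~\ref{lem:2}, which is precisely what you do, and your constant identities $2L_2/r=\epsilon_{\mathrm b}\lambda_2$ and $\sum_j \kappa R_j j(j-1)/r=\epsilon_{\mathrm b}\rho'(1)$ check out. The replacement of $s$ by $(1+s)^r-1$ in the chain of substitutions correctly produces $s(0)=\bigl(\tfrac{2^p-1}{\lambda'(0)\rho'(1)}+1\bigr)^{1/r}-1$ and hence the stated expansion.
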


We define
\begin{align*}
 &\delta^{*} 
  :=
 \inf\{ \omega > 0 \mid \gamma(\omega) \ge 0 \}, \\
 &\delta^{*}_{\mathrm{b}}
  :=
 \inf\{ \omega_{\mathrm{b}} > 0 \mid \gamma_{\mathrm{b}}(\omega_{\mathrm{b}}) \ge 0 \},
\end{align*}
and refer to them as the {\it normalized typical minimum distance} in terms of symbol and bit weight, respectively.
Recall that the average number of codeword of symbol weight $\omega N$ (resp.\ bit weight $\omega_{\mathrm{b}} n$) is approximated by $A(\omega N) \sim 2^{r \gamma(\omega) N}$ (resp.\ $A_{\mathrm{b}}(\omega_{\mathrm{b}} n) \sim 2^{\gamma_{\mathrm{b}}(\omega_{\mathrm{b}}) n}$).
Since $\gamma(\omega) < 0$ (resp.\ $\gamma_{\mathrm{b}}(\omega_{\mathrm{b}}) < 0$) for $\omega\in(0,\delta^*)$ (resp.\ for $\omega_{\mathrm{b}}\in(0,\delta^*_{\mathrm{b}})$), 
there are exponentially few codewords of symbol weight $\omega N$ (resp.\ bit weight $\omega_{\mathrm{b}} n$) for $\omega \in(0,\delta^*)$ (resp.\ for $\omega_{\mathrm{b}} \in (0,\delta^*_{\mathrm{b}})$).

Theorem \ref{the:5} and \ref{the:6} gives the following corollary.
\begin{corollary} \label{cor:5}
For the irregular non-binary cluster LDPC code ensemble $\Ens(N,p,r,\lambda,\rho)$ with sufficiently large $N$, 
the normalized typical minimum distances $\delta^*$ and $\delta^*_{\mathrm{b}}$ in terms of symbol and bit weight, respectively, are strictly positive if
\begin{align}
 \lambda'(0)\rho'(1) < \frac{2^p-1}{2^r-1}. \label{eq:rmd_cond}
\end{align}
\end{corollary}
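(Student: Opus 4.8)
The plan is to show that condition~\eqref{eq:rmd_cond} forces the linear (first-order) term of $\gamma(\omega)$ near $\omega=0$ to be negative, and then combine this with continuity and concavity-type properties of the growth rate near the origin to conclude that $\gamma(\omega)<0$ on an interval $(0,\epsilon_0)$, hence $\delta^*\ge\epsilon_0>0$. The same argument applied to $\gamma_{\mathrm b}$ handles $\delta^*_{\mathrm b}$.

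First I would invoke Theorem~\ref{the:5}: under the hypothesis $\lambda_2>0$ (which is implicit, since otherwise $\lambda'(0)=0$ and \eqref{eq:rmd_cond} is vacuous or the statement degenerate), we have
\begin{align*}
 \gamma(\omega)
  =
 -\frac{\omega}{r}\log\biggl[\frac{2^p-1}{(2^r-1)\lambda'(0)\rho'(1)}\biggr]+o(\omega).
\end{align*}
The bracketed quantity exceeds $1$ exactly when $(2^r-1)\lambda'(0)\rho'(1)<2^p-1$, i.e.\ precisely condition~\eqref{eq:rmd_cond}. In that regime the logarithm is positive, so the coefficient of $\omega$ is strictly negative; call it $-c$ with $c>0$. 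Then $\gamma(\omega)=-c\,\omega+o(\omega)$, so there exists $\epsilon_0>0$ with $\gamma(\omega)<0$ for all $\omega\in(0,\epsilon_0)$. Hence no $\omega$ in $(0,\epsilon_0)$ belongs to $\{\omega>0\mid\gamma(\omega)\ge0\}$, which gives $\delta^*\ge\epsilon_0>0$.

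For the bit-weight case I would argue identically from Theorem~\ref{the:6}: the coefficient of $\omega_{\mathrm b}$ is $-\log\bigl[(\frac{2^p-1}{\lambda'(0)\rho'(1)}+1)^{1/r}-1\bigr]$, which is negative iff $(\frac{2^p-1}{\lambda'(0)\rho'(1)}+1)^{1/r}>2$, i.e.\ iff $\frac{2^p-1}{\lambda'(0)\rho'(1)}+1>2^r$, i.e.\ iff $\lambda'(0)\rho'(1)<\frac{2^p-1}{2^r-1}$ --- again exactly \eqref{eq:rmd_cond}. So the same $o(\omega_{\mathrm b})$ reasoning yields $\gamma_{\mathrm b}(\omega_{\mathrm b})<0$ on some $(0,\epsilon_{\mathrm b})$ and hence $\delta^*_{\mathrm b}>0$.

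The only real subtlety --- and the step I would be most careful about --- is passing from ``the linear coefficient is negative'' to ``$\gamma$ is negative on a whole neighbourhood of $0$.'' This is immediate from the definition of the $o(\omega)$ remainder (divide by $\omega$, take $\omega\searrow0$, and use that the limit is $-c<0$), so no extra regularity of $\gamma$ beyond what Theorems~\ref{the:5} and~\ref{the:6} already provide is needed; one simply must state the $\epsilon$--$\delta$ extraction cleanly rather than appeal vaguely to continuity. I do not expect any genuine obstacle here --- the corollary is essentially a restatement of the sign condition hidden inside the small-weight expansions, and the proof is a short deduction.
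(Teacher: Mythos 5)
Your proposal is correct and follows exactly the route the paper intends: the paper simply states that Theorems~\ref{the:5} and~\ref{the:6} give the corollary, i.e.\ the sign of the first-order coefficient in the small-weight expansions under condition~\eqref{eq:rmd_cond} forces $\gamma(\omega)<0$ and $\gamma_{\mathrm{b}}(\omega_{\mathrm{b}})<0$ on a punctured neighbourhood of the origin, hence $\delta^*>0$ and $\delta^*_{\mathrm{b}}>0$. Your algebraic verification that both sign conditions reduce to $\lambda'(0)\rho'(1)<\frac{2^p-1}{2^r-1}$, and your careful extraction of a negative interval from the $o(\omega)$ remainder, are exactly the (omitted) details of the paper's argument.
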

\begin{remark}
For the non-binary LDPC code ensembles defined over finite field $\mathbb{F}_{2^p}$,
the normalized typical minimum distances are strictly positive if $\lambda'(0)\rho'(1)<1$
\cite{Kasai_nbw}.
For the non-binary LDPC code ensembles defined by the parity check matrices over general linear group $\mathrm{GL}(p,\mathbb{F}_2)$,
a necessary condition that the normalized typical minimum distances are strictly positive is also $\lambda'(0)\rho'(1)<1$ from Corollary \ref{cor:5} with $p=r$.
On the other hand, in the case for the non-binary cluster LDPC code ensembles,
a necessary condition that the normalized typical minimum distances are strictly positive  depends on not only $\lambda'(0)\rho'(1)$ but also the size of cluster $p,r$ as in Corollary \ref{cor:5}.

Therefore, for any degree distribution pair ($\lambda, \rho$), 
we are able to satisfy Eq.~\eqref{eq:rmd_cond} by sufficiently large $p,r$ with fixed ratio, i.e., for fixed designed rate and degree distribution pair.
\end{remark}

\subsection{Numerical Examples}

\begin{figure}[t]
 \begin{center}
  \includegraphics[width = .65\linewidth]{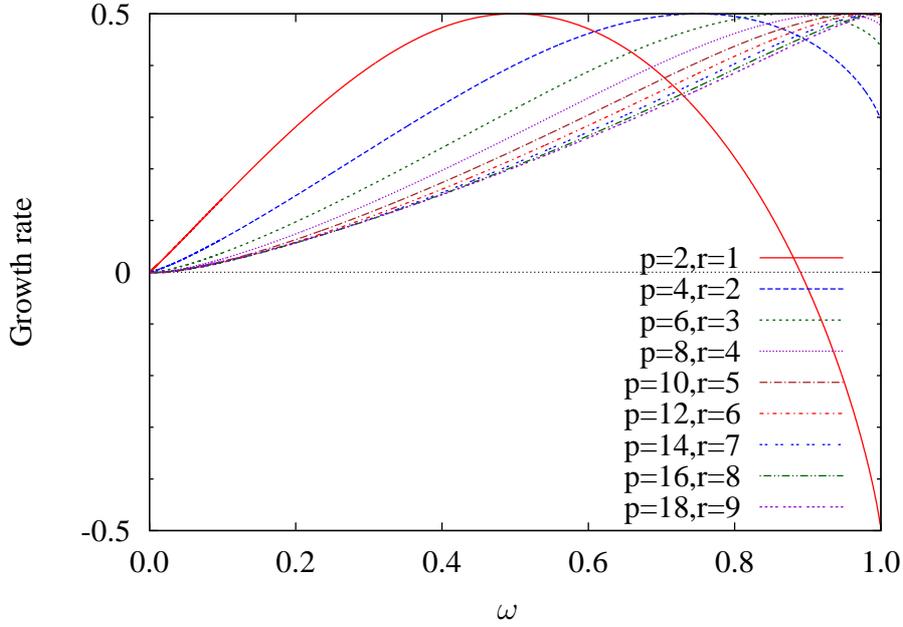} 
  \caption{Growth rates to the average symbol weight distributions for the $(2,8)$-regular non-binary cluster LDPC code ensembles with the cluster size $(p,r) = (2,1), (4,2),\dots, (18,9)$. \label{fig:symbol}}
 \end{center}
\end{figure}
\begin{figure}
 \begin{center}
  \includegraphics[width = .65\linewidth]{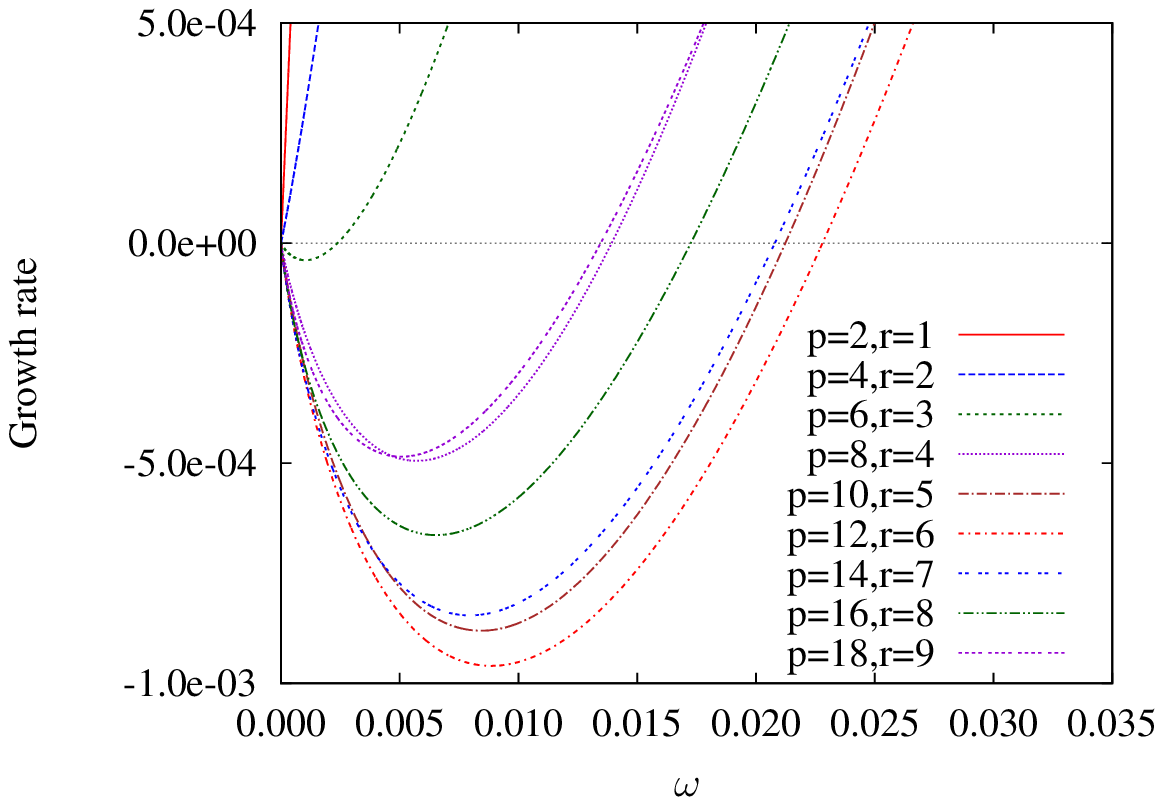}
 \caption{Growth rates to the average symbol weight distributions for the $(2,8)$-regular non-binary cluster LDPC code ensembles with the cluster size $(p,r) = (2,1), (4,2),\dots, (18,9)$.  \label{fig:symbol_d}}
 \end{center}
\end{figure}
\begin{figure}[t]
 \begin{center}
  \includegraphics[width = .65\linewidth]{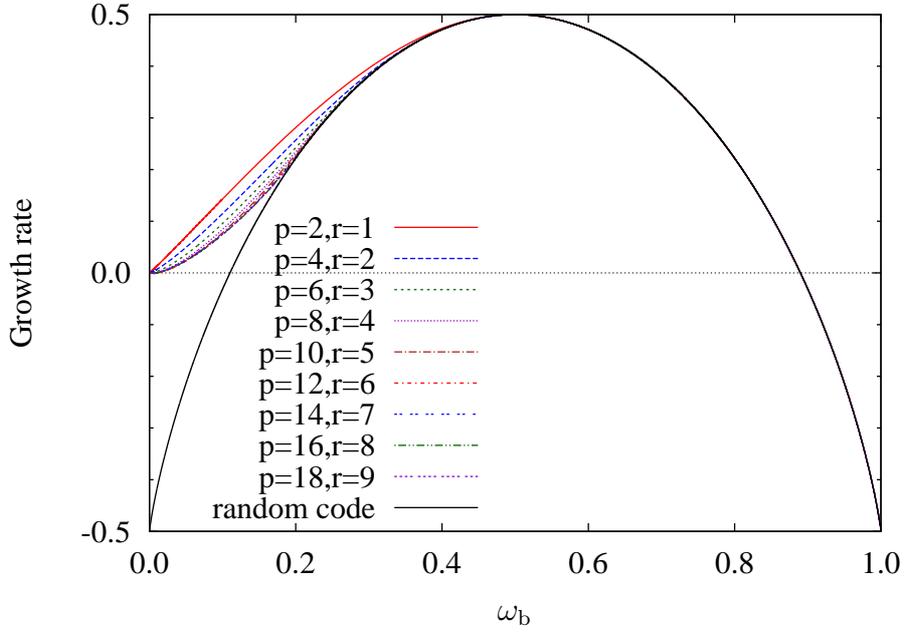}
  \caption{Growth rates to the average bit weight distributions for the $(2,8)$-regular non-binary cluster LDPC code ensembles with the cluster size $(p,r) = (2,1), (4,2),\dots, (18,9)$.
The black solid curve (random code) gives the growth rate for the binary random code ensemble of rate $0.5$.  \label{fig:bit}}
 \end{center}
\end{figure}
\begin{figure}[t]
 \begin{center}
  \includegraphics[width = .65\linewidth]{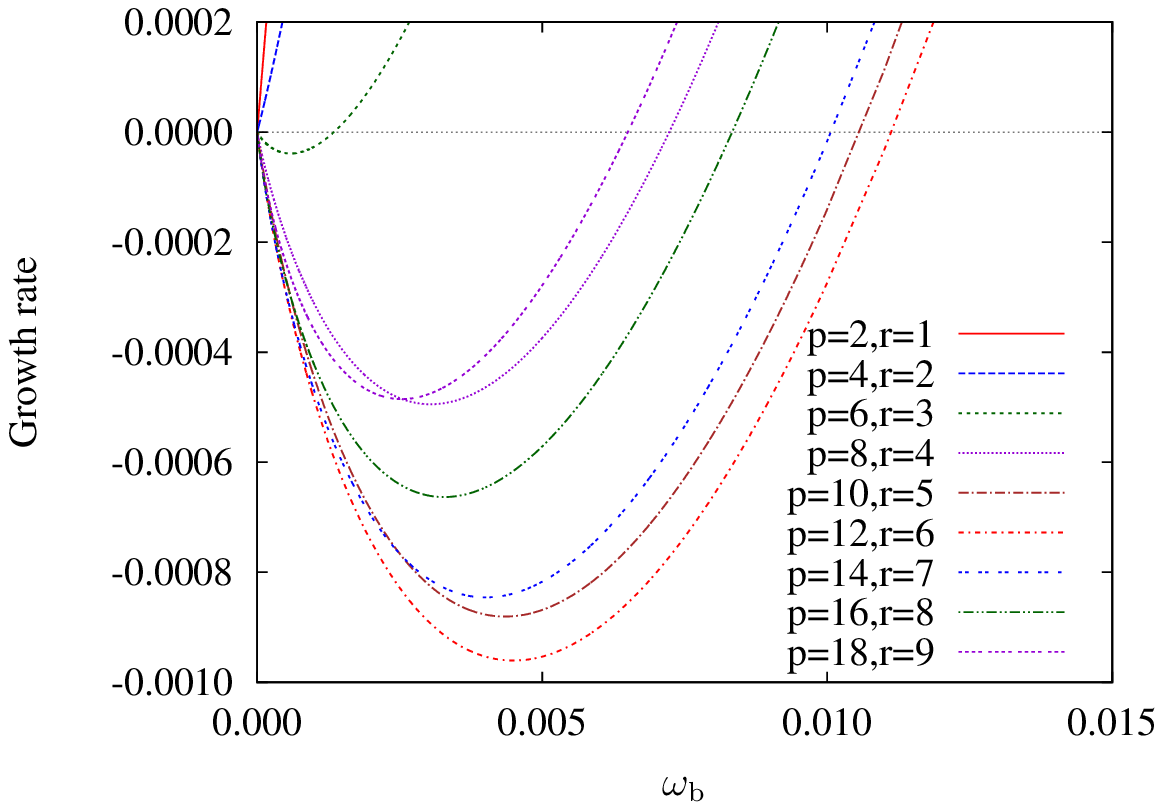}
  \caption{Growth rates to the average bit weight distributions for the $(2,8)$-regular non-binary cluster LDPC code ensembles with the cluster size $(p,r) = (2,1), (4,2),\dots, (18,9)$.  \label{fig:bit_d}}
 \end{center}
\end{figure}
\begin{figure}
 \begin{center}
  \includegraphics[width = .65\linewidth]{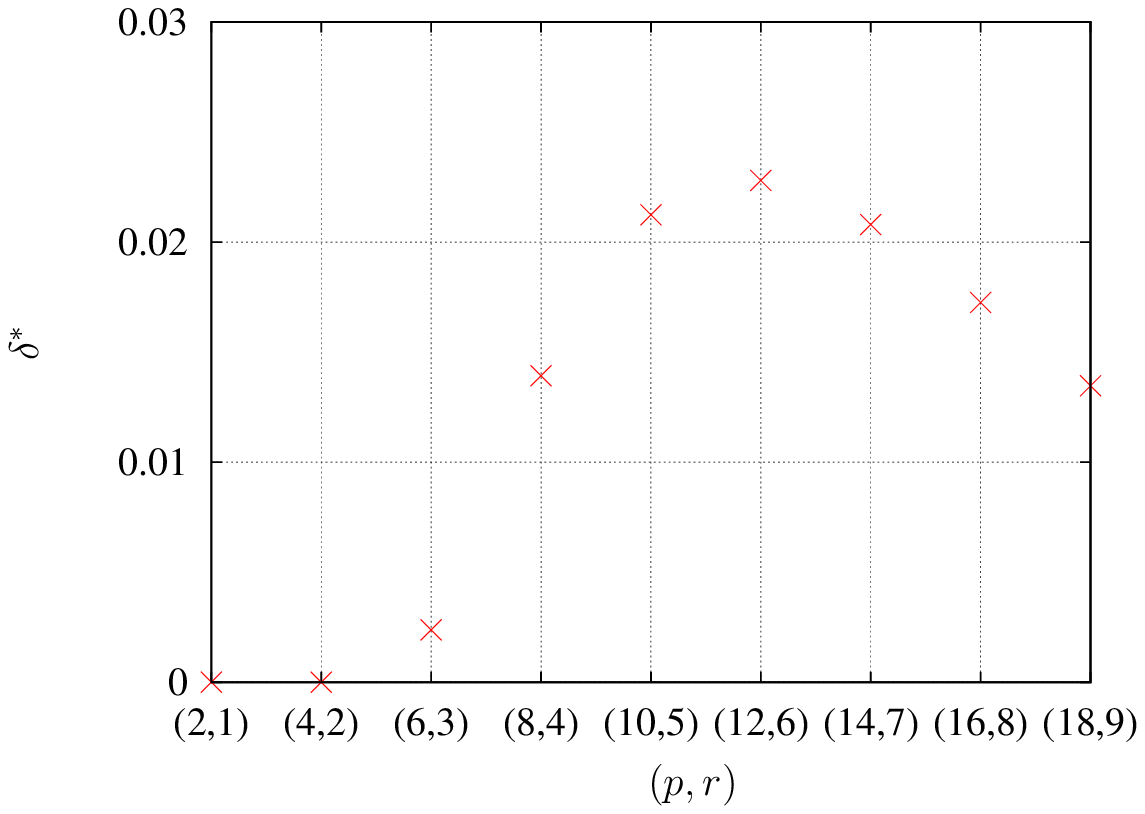}
  \caption{The normalized typical minimum distance $\delta^{*}$ of the symbol weight distribution for the (2,8)-regular non-binary cluster LDPC code ensemble with the cluster size $(p,r)=(2,1),(4,2),\dots,(18,9)$. \label{fig:zc_s}}
 \end{center}
\end{figure}
\begin{figure}
 \begin{center}
  \includegraphics[width = .65\linewidth]{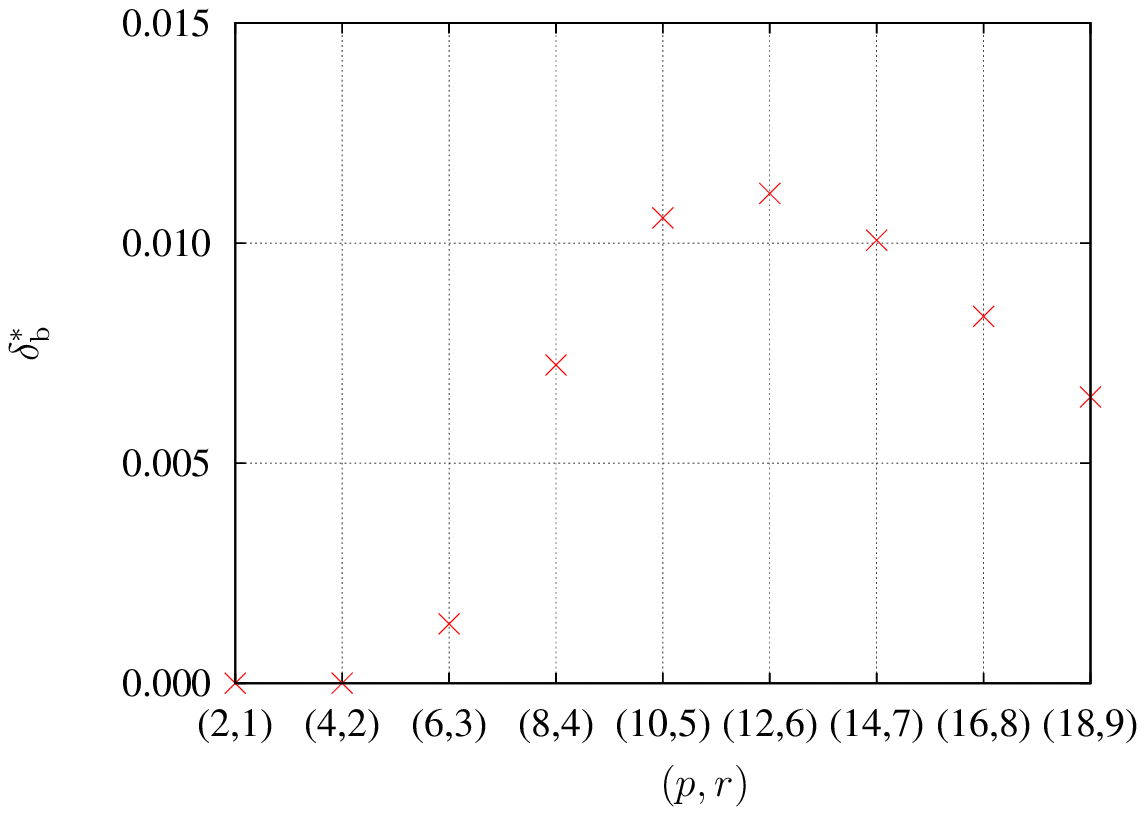}
  \caption{The normalized typical minimum distance $\delta^{*}_{\mathrm{b}}$ of the bit weight distribution for the (2,8)-regular non-binary cluster LDPC code ensemble with the cluster size $(p,r)=(2,1),(4,2),\dots,(18,9)$. \label{fig:zc_b}}
 \end{center}
\end{figure}


In this section, we show some numerical examples of the growth rates for the cluster non-binary LDPC code ensembles.
As an example, we employ the $(2,8)$-regular non-binary cluster LDPC codes.
To keep the design rate at half, we fix the ratio of the cluster size as $p/r = 2$.

Figures \ref{fig:symbol} and \ref{fig:symbol_d} give the growth rates to the average symbol weight distributions for the cluster size $(p,r) = (2,1), (4,2),\dots, (18,9)$.
As shown in Corollary \ref{cor:3}, $\gamma(1)$ tends to the design rate $0.5$.
From Figure \ref{fig:symbol_d}, we see that the slop of the growth rate at $\omega = 0$ are negative and the normalized typical minimum distance $\delta^*$ is strictly positive for $(p,r) = (6,3),(8,4),\dots,(18,9)$.
This confirms Corollary \ref{cor:5}.

Figures \ref{fig:bit} and \ref{fig:bit_d} give the growth rates to the average bit weight distributions for the cluster size $(p,r) = (2,1), (4,2),\dots, (18,9)$.
The black solid curve in Figure \ref{fig:bit} shows the growth rate of the binary random code ensemble of rate $0.5$.
As shown in Corollary \ref{cor:4}, $\gamma_{\mathrm{b}}(1)$ tends to $-0.5$.
Moreover, we see that the curves in $\omega_{\mathrm{b}} > 1/2$ converge to the growth rate of the binary random code ensemble.
From Figure \ref{fig:bit_d}, we see that the slop of the growth rate at $\omega_{\mathrm{b}} = 0$ are negative and the normalized typical minimum distance $\delta^*_{\mathrm{b}}$ is strictly positive for $(p,r) = (6,3),(8,4),\dots,(18,9)$.
This confirms Corollary \ref{cor:5}.

Figures \ref{fig:zc_s} and \ref{fig:zc_b} give the normalized typical minimum distance $\delta^{*}$ and $\delta^{*}_{\mathrm{b}}$ of the symbol and bit weight distribution, respectively, 
for the cluster size $(p,r)=(2,1),(4,2),\dots,(18,9)$.
From Figures \ref{fig:zc_s} and \ref{fig:zc_b}, we see that the normalized typical minimum distances $\delta^{*}$ and $\delta^{*}_{\mathrm{b}}$ does not monotonically increase with the size of cluster $(p,r)$.
In this case, the normalized typical minimum distances $\delta^*, \delta^{*}_{\mathrm{b}}$ have the local maximum at $(p,r) = (12,6)$.

\section{Conclusion}
In this paper, we have derived the average weight distributions for the irregular non-binary cluster LDPC code ensembles.
Moreover, we have given the exponential growth rate of the average weight distribution in the limit of large code length.
We have shown that there exist ($2,d_{\mathrm{c}}$)-regular non-binary cluster LDPC code ensembles whose normalized typical minimum distances are strictly positive.

\section*{Acknowledgment}
This work was supported by Grant-in-Aid for JSPS Fellows.
The work of K.~Kasai was supported by the grant from the Storage Research Consortium.

\bibliographystyle{IEEEtran}
\bibliography{IEEEabrv,nozaki_bib}

\begin{thebibliography}{10}
\providecommand{\url}[1]{#1}
\csname url@samestyle\endcsname
\providecommand{\newblock}{\relax}
\providecommand{\bibinfo}[2]{#2}
\providecommand{\BIBentrySTDinterwordspacing}{\spaceskip=0pt\relax}
\providecommand{\BIBentryALTinterwordstretchfactor}{4}
\providecommand{\BIBentryALTinterwordspacing}{\spaceskip=\fontdimen2\font plus
\BIBentryALTinterwordstretchfactor\fontdimen3\font minus
  \fontdimen4\font\relax}
\providecommand{\BIBforeignlanguage}[2]{{%
\expandafter\ifx\csname l@#1\endcsname\relax
\typeout{** WARNING: IEEEtran.bst: No hyphenation pattern has been}%
\typeout{** loaded for the language `#1'. Using the pattern for}%
\typeout{** the default language instead.}%
\else
\language=\csname l@#1\endcsname
\fi
#2}}
\providecommand{\BIBdecl}{\relax}
\BIBdecl

\bibitem{Gallager_LDPC}
R.~G. Gallager, \emph{Low {D}ensity {P}arity {C}heck {C}odes}.\hskip 1em plus
  0.5em minus 0.4em\relax in Research Monograph series, MIT Press, Cambridge,
  1963.

\bibitem{richardson01design}
T.~Richardson, M.~A. Shokrollahi, and R.~Urbanke, ``Design of
  capacity-approaching irregular low-density parity-check codes,'' \emph{{IEEE}
  Trans. Inf. Theory}, vol.~47, no.~2, pp. 619--637, Feb. 2001.

\bibitem{DaveyMacKayGFq}
M.~Davey and D.~MacKay, ``Low-density parity check codes over {GF}($q$),''
  \emph{{IEEE} Commun. Lett.}, vol.~2, no.~6, pp. 165--167, Jun. 1998.

\bibitem{Hu03regularand}
X.-Y. Hu, E.~Eleftheriou, and D.~Arnold, ``Regular and irregular progressive
  edge-growth {T}anner graphs,'' \emph{{IEEE} Trans. Inf. Theory}, vol.~51,
  no.~1, pp. 386--398, Jan. 2005.

\bibitem{6034183}
V.~Savin and D.~Declercq, ``Linear growing minimum distance of ultra-sparse
  non-binary cluster-{LDPC} codes,'' in \emph{Proc. 2011 {IEEE} Int. Symp. Inf.
  Theory (ISIT)}, Aug. 2011, pp. 523--527.

\bibitem{959254}
G.~Miller and D.~Burshtein, ``Bounds on the maximum-likelihood decoding error
  probability of low-density parity-check codes,'' \emph{{IEEE} Trans. Inf.
  Theory}, vol.~47, no.~7, pp. 2696--2710, Nov. 2001.

\bibitem{di_wd}
C.~Di, T.~Richardson, and R.~Urbanke, ``Weight distribution of low-density
  parity-check codes,'' \emph{{IEEE} Trans. Inf. Theory}, vol.~52, no.~11, pp.
  4839--4855, Nov. 2006.

\bibitem{mct}
T.~Richardson and R.~Urbanke, \emph{Modern Coding Theory}.\hskip 1em plus 0.5em
  minus 0.4em\relax Cambridge University Press, Mar. 2008.

\bibitem{Kasai_nbw}
K.~Kasai, C.~Poulliat, D.~Declercq, and K.~Sakaniwa, ``Weight distribution of
  non-binary {LDPC} codes,'' \emph{IEICE Trans. Fundamentals}, vol. E94-A,
  no.~4, pp. 1106--1115, Apr. 2011.

\bibitem{5205662}
I.~Andriyanova, V.~Rathi, and J.-P. Tillich, ``Binary weight distribution of
  non-binary ldpc codes,'' in \emph{Proc. 2009 {IEEE} Int. Symp. Inf. Theory
  (ISIT)}, June-July 2009, pp. 65--69.

\bibitem{1302293}
D.~Burshtein and G.~Miller, ``Asymptotic enumeration methods for analyzing
  {LDPC} codes,'' \emph{IEEE Transactions on Information Theory}, vol.~50,
  no.~6, pp. 1115--1131, Jun. 2004.

\bibitem{Awano_multi}
K.~Kasai, T.~Awano, D.~Declercq, C.~Poulliat, and K.~Sakaniwa, ``Weight
  distribution of multi-edge type {LDPC} codes,'' \emph{IEICE Trans.
  Fundamentals}, vol. E93-A, no.~11, pp. 1942--1948, Nov. 2010.

\end{thebibliography}

\end{document}